\newtheorem{theorem}{Theorem}[section]
\newtheorem{lemma}[theorem]{Lemma}
\newtheorem{corollary}[theorem]{Corollary}
\newtheorem{definition}[lemma]{Definition}
\newtheorem{observation}[lemma]{Observation}
\newtheorem{proposition}[lemma]{Proposition}
\newtheorem{fact}[lemma]{Fact}
\newtheorem{question}[lemma]{Question}
\newtheorem{def-restatable}[theorem]{Definition}
\newtheorem{obs-restatable}{Lemma}
\newtheorem{lemma-restatable}[theorem]{Lemma}
\newtheorem{corollary-restatable}[theorem]{Corollary}
\newcommand{\eps}{\varepsilon}
\newcommand{\Oh}{\mathcal{O}}
\newcommand{\Ohtilda}{\tilde{\Oh}}
\newcommand{\red}{\leq_2}
\newcommand{\reddesc}[1]{\stackrel{\text{#1} }\red}
\newcommand{\eqq}{\equiv_2}
\newcommand{\eqqdesc}[1]{\stackrel{\text{#1} }\eqq}
\newcommand{\ldt}[3]{\text{3\LDT}(#1,\overline{#2},#3)}
\newcommand{\ldtc}[3]{\text{3\LDT}_c(#1,\overline{#2},#3)}
\newcommand{\norm}[1]{\|#1\|}
\newcommand{\Patrascu}{P\v{a}tra\c{s}cu}
\newcommand{\ind}{[-n', n']}
\DeclareMathOperator{\sgn}{sgn}
\def\polylog{\operatorname{polylog}}
\DeclareMathOperator{\lcm}{lcm}
\newcommand{\SUM}{\textsf{SUM}}
\newcommand{\LDT}{\textsf{LDT}}
\newcommand{\AVG}{\textsf{Average}}
\newcommand{\Convolution}{\textsf{Conv}}
\newcommand{\Conv}{\Convolution3\LDT}
\newcommand{\II}{\mathcal{I}}
\newcommand{\BB}{\mathcal{B}}
   \newcommand{\defproblemwithparameters}[4]{
  \vspace{2mm}
\noindent\fbox{
  \begin{minipage}{0.96\textwidth}
  #1\\
  {\bf{Parameters:}} #4.\\
  {\bf{Input:}} #2  \\
  {\bf{Output:}} Are there #3?
  \end{minipage}
  }
  \vspace{2mm}
}
\title{Equivalences between Non-trivial Variants of 3LDT and Conv3LDT\footnote{A preliminary version appeared in Proceedings of the 52nd Annual ACM SIGACT Symposium on Theory of Computing (STOC 2020), pp. 974--981 \cite{DudekGS20}. Partially supported by the Polish National Science Centre grant number 2023/51/B/ST6/01505.}}
\date{}
\author[1]{Bartłomiej Dudek}
\author[1]{Paweł Gawrychowski}
\author[2]{Tatiana Starikovskaya}
\affil[1]{Institute of Computer Science, University of Wrocław, Poland}
  \affil[ ]{\texttt{\{bartlomiej.dudek,gawry\}@cs.uni.wroc.pl}}
\affil[2]{DIENS, \'{E}cole normale sup\'{e}rieure, PSL Research University, France}
\affil[ ]{\texttt{tat.starikovskaya@gmail.com}}
\begin{document}
\maketitle

\begin{abstract}
The popular 3\SUM{} conjecture states that there is no strongly subquadratic time algorithm for checking if a given set of integers contains three distinct elements $x_1, x_2, x_3$ such that $x_1+x_2=x_3$. A closely related problem is to check if a given set of integers contains distinct elements satisfying $x_1+x_2=2x_3$. This can be reduced to 3\SUM{} in almost-linear time, but surprisingly a reverse reduction establishing 3\SUM{} hardness was not known.

We provide such a reduction, thus resolving an open question of Erickson. In fact, we consider a more general problem called 3\LDT{} parameterized by integer parameters $\alpha_1, \alpha_2, \alpha_3$ and $t$. In this problem, we need to check if a given set of  integers contains distinct elements $x_1, x_2, x_3$ such that $\alpha_1 x_1+\alpha_2 x_2 +\alpha_3 x_3 = t$. For some combinations of the parameters, every instance of this problem is a NO-instance or there exists a simple almost-linear time algorithm. We call such variants trivial.
We prove that all non-trivial variants of 3\LDT{} over the same universe $[-n^c,n^c]$ for some $c\geq2$ are equivalent under subquadratic reductions.
The main technical tool used in our proof is an application of the famous Behrend's construction that partitions a given set of integers into few subsets that avoid a chosen linear equation.

\Patrascu{} observed that while proving 3\SUM{}-hardness it is convenient to work with a more structured problem related to 3\SUM{}, \Convolution3\SUM{}.
Hence it is natural to consider also Convolution variants of 3\LDT{} in which we are given an array of numbers and need to find three elements such that both their values and indices satisfy the linear equation, that is: $\sum_{i=1}^3 \alpha_i j_i=t$ and  $\sum_{i=1}^3 \alpha_i A[j_i]=t$.
We extend our results to \Convolution3\LDT{} and show that for all $c\geq2$, all non-trivial variants of 3\LDT{} over the universe $[-n^c,n^c]$ and of \Convolution3\LDT{} over the universe $[-n^{c-1},n^{c-1}]$ are subquadratic-equivalent, so in particular they are all equivalent to 3\SUM{} under subquadratic reductions.

As the last step, we show how to apply the methods of Fischer, Kaliciak, and Polak [ITCS 2024] to show that we can reduce non-trivial variant of 3\LDT{} over an arbitrary universe to the same variant over cubic universe, and similarly reduce \Conv{} to instances over quadratic universe. This gives us a complete classification of variants of 3\LDT{} and \Conv: every variant is either subquadratically equivalent to 3\SUM{} over $[-n^c,n^c]$ for some $c\in[2,3]$ or can be solved in subquadratic time.
\end{abstract}

\newpage

\section{Introduction}\label{sec:introduction}

The well-known 3\SUM{} problem is to decide, given a set $X$ of $n$ integers, whether any three distinct elements $x_1,x_2,x_3$ of $X$ satisfy $x_1+x_2=x_3$.
This can be easily solved in quadratic time by first sorting $X$, checking all candidates for $x_3$ and for each of them scanning the sorted sequence with two pointers.
For many years no faster algorithm was known, and it was conjectured that no significantly faster algorithm exists.
This assumption led to strong lower bounds for multiple problems in computational geometry~\cite{GajentaanO95}
and, more recently, became a central problem in the field of fine-grained complexity \cite{Williams15}.
Furthermore, it has been proven that in some restricted models of computation 3\SUM{} requires
$\Omega(n^{2})$ time~\cite{EricksonBounds99,AilonC05}.

However, in 2014 Gr{\o}nlund and Pettie~\cite{GronlundP18} proved that the decision tree complexity of 3\SUM{} is only $\Oh(n^{1.5}\sqrt{\log n})$, which ruled out any almost quadratic-time lower bounds in the decision tree model. This was later improved by Gold and Sharir to $\Oh(n^{1.5})$~\cite{GoldS17} and finally to $\Oh(n\log^2 n)$ by Kane et al. \cite{KaneLM19}.
The upper bounds for the decision tree model were later used to design a series of algorithms for a version of 3\SUM{} in the real RAM model where the set $X$ can also contain real numbers.
In this model, Gr{\o}nlund and Pettie~\cite{GronlundP18} derived an $\Oh(n^{2}(\log\log n)^{2}/\log n)$ time randomized algorithm and an $\Oh(n^{2}(\log\log n/\log n)^{2/3})$ time deterministic algorithm. The best deterministic bound was soon improved to $\Oh(n^{2}\log \log n/\log n)$ by Gold and Sharir~\cite{GoldS17} and (independently) by Freund~\cite{Freund17} and then to $\Oh(n^2(\log\log n)^{\Oh(1)}/\log^2n)$ by Chan \cite{Chan18}. 
These results immediately imply similar bounds for the integer version of 3\SUM{}.
In the word RAM model with machine words of size $w$, Baran, Demaine and \Patrascu\ \cite{BaranDP08} provided an algorithm with $\Oh(n^2/\max\{\frac{w}{\log^2w},\frac{\log^2n}{(\log\log n)^2}\})$ expected time.

Even though asymptotically faster than $\Oh(n^{2})$, these algorithms are not strongly subquadratic, meaning working in $\Oh(n^{2-\eps})$ time, for some $\eps>0$. This motivates the popular modern version of the conjecture, which is that the 3\SUM{} problem cannot be solved in $\Oh(n^{2-\eps})$ time (even in expectation), for any $\eps>0$, on the word RAM model with words of size $\Oh(\log n)$~\cite{Patrascu10}. By now we have multiple examples of other problems that can be shown to be hard assuming this conjecture, especially in geometry~\cite{GajentaanO95,BergGO97a,Erickson99c,ArchambaultEK05,SossEO03,CheongEH04,AbellanasHIKLMPS01,AronovH08,ArkinCHMSSY98,BarequetH01,EricksonHM06,BoseKT98}, but also in also in dynamic algorithms and data structures \cite{AbboudW14,KopelowitzPP16,Patrascu10,Dahlgaard16,AbboudWY18}, string algorithms \cite{AbboudWW14,AmirCLL14,KopelowitzPP16}, finding exact weight subgraphs \cite{WilliamsW13,AbboudL13} and finally in partial matrix multiplication and reporting variants of convolution \cite{GoldsteinKLP16}.

In particular, it is well-known that the 3\SUM{} problem defined above is subquadratic-equivalent to its 3-partite variant, where we are given three sets $S_1, S_2, S_3$ containing at most $n$ integers each, and must decide whether there is $x_1 \in S_1$, $x_2 \in S_2$, and $x_3 \in S_3$ such that $x_1 + x_2 = x_3$. To reduce 3-partite 3\SUM{} to 1-partite, we can add a multiple of some sufficiently big number $M$ to all elements in every set and take the union, for example: 
$$X= \{3M+x: x\in S_1\} \cup \{M+x: x\in S_2\} \cup \{4M+x: x\in S_3\}.$$
$M$ is chosen so that the only possibility for the three elements of $X$ to satisfy $x_1+x_2=x_3$ is that they correspond to three elements belonging to distinct sets $S_1$, $S_2$, and $S_3$. To show the reduction from $1$-partite 3\SUM{} to 3-partite, a natural approach would be to take $S_1 = S_2 = S_3 = X$. However, this does not quite work as in the $1$-partite variant we desire $x_1, x_2, x_3$ to be \emph{distinct}. In the folklore reduction, this technicality is overcome using the so-called color-coding technique by Alon et al.~\cite{AlonYZ95}.

A natural generalization of 3\SUM{} is $3$-variate linear degeneracy testing, or $3$\LDT{} \cite{AilonC05}. In this problem, we are given a set $X$ of $n$ integers, integer parameters $\alpha_1, \alpha_2, \alpha_3$ and~$t$, and must decide whether there are $3$ distinct numbers $x_{1}, x_{2}, x_{3} \in X$ such that $\sum_{i=1}^{3}\alpha_i x_i = t$. Similar to 3\SUM{}, the $3$\LDT{} problem can be considered in the $3$-partite variant as well. 

A particularly natural variant of the $1$-partite $3$\LDT{} problem is as follows: given a set $X$ of $n$ numbers, are there three distinct $x_1, x_2, x_3\in X$
such that $x_1+x_2-2x_3=0$?
In other words, we want to check if a set contains three distinct elements that form an arithmetic progression.
Following Erickson~\cite{Erickson99}, we call this problem \AVG.
It is easy to see that \AVG{} reduces to $\Oh(\log n)$ instances of 3-partite 3\SUM{} where the $j$-th instance consists of the sets $S_j, X\setminus S_j, Y$ such that $S_j=\{x_{i}\in X: \text{ the $j$-th bit of $i$ is 1}\}$ (when $X=\{x_{1},\ldots,x_{n}\}$) and $Y=\{2x: x\in X\}$.
However, a reverse reduction seems more elusive and in fact according to Erickson~\cite{Erickson99} it is not known whether \AVG{} is 3\SUM-hard\footnote{See \url{https://cs.stackexchange.com/questions/10681/is-detecting-doubly-arithmetic-progressions-3sum-hard/10725\#10725}.}. This suggests the following question.

\begin{question}
\label{q:avg}
Can we design a reduction from 3\SUM{} to \AVG{}? Or is \AVG{} easier than 3\SUM{}?
\end{question}

A more ambitious question would be to provide a complete characterisation of all variants of 3\LDT{} parameterized by $\alpha_{1},\alpha_{2},\alpha_{3}, t$. We know that in the restricted 3-linear decision tree model solving every variant where all coefficients $\alpha_i$ are nonzero requires quadratic time \cite{EricksonBounds99,AilonC05}, but by now we know that this model is not necessarily the most appropriate for such problems. Because of that, we focus on the classical word RAM model, which is also the model under which most problems that are hard conditional on the 3\SUM{} conjecture have been studied.

\begin{question}
\label{q:ldt}	
Which variants of 3\LDT{} are easier than others in the word RAM model? Or are they all equivalent?
\end{question}

In this work, we show that all non-trivial variants of 3\LDT{} are subquadratic-equivalent in the word RAM model, which, in particular, implies that \AVG{} is subquadratic-equivalent to 3\SUM{}. Thus, we completely resolve both Question~\ref{q:avg} and Question~\ref{q:ldt}.

In his seminal work~\cite{Patrascu10}, \Patrascu\
introduced a more structured variant of the 3\SUM{} problem called  \Convolution3\SUM.\footnote{According to \cite{ChanH20}, the earliest reference to a problem similar to \Convolution 3\SUM{} is from 2005: \url{https://3dpancakes.typepad.com/ernie/2005/08/easy_but_not_th.html}}
In this problem, we are given an integer array $A$ and must decide whether there exist two distinct indices $i,j$ such that $A[i]+A[j]=A[i+j]$.
This variant is more useful for establishing reductions from 3\SUM{} and led to a number of 3\SUM-hardness results, e.g. for dynamic problems \cite{Patrascu10,AbboudW14,KopelowitzPP16} and string algorithms \cite{AmirCLL14,KopelowitzPP16}.
\Patrascu{}~\cite{Patrascu10} showed that \Convolution3\SUM{} is subquadratic-equivalent to 3\SUM.
Recently, Kopelowitz et al.~\cite{KopelowitzPP16} refined the reduction of \Patrascu{} to achieve a smaller slowdown factor, and Chan and He~\cite{ChanH20} presented a simple deterministic reduction.
Hence it is natural to consider also \Conv{} variants of 3\LDT, in which both indices of the elements and their values need to satisfy the given linear equation.
We further extend our techniques to show that all non-trivial variants of \Convolution3\LDT{} are subquadratic-equivalent to each other and to non-trivial variants of 3\LDT{}, with the size of the universe increased by the factor of~$n$ while switching from convolution to non-convolution variants of 3\LDT.

We provide formal definitions and statements of our results in Section~\ref{sec:overview}.

\paragraph{Follow-up work.}
Using divide and conquer paradigm and techniques from the conference version of our paper, Radoszewski et al. \cite{RadoszewskiRSWZ21} showed that \Convolution\AVG{} is 3\SUM{}-hard (though increasing the size of the universe by the factor of $n$ in the reduction).
This allowed them to prove 3\SUM{}-hardness of various problems related to Abelian squares in a text.

A direct attempt at generalizing our results to show the equivalence between $k$\SUM{} and $k$\LDT{} for $k>3$ seem problematic.
For concreteness, consider the variant of 4\LDT{} in which we ask about existence of 4 distinct numbers $x_1,x_2,x_3,x_4$ such that $x_1+x_2=x_3+x_4$.
The sets avoiding existence of such four numbers are called Sidon sets and are well-studied in additive combinatorics \cite{Obryant04}.
It is easy to see that any Sidon subset of $[N]$ contains at most $\Oh(\sqrt N)$ elements.
However, to apply our approach we need to partition an arbitrary set into few such subsets, which is impossible.
Therefore another idea is required to prove 4\SUM-hardness of 4\LDT.
In their very recent result, Jin and Xu \cite{JinX23} found a different approach and showed that all non-trivial variants of 4\LDT{} are 3\SUM-hard.
As an intermediate step, they showed 3\SUM-hardness of 3\SUM{} on  Sidon sets.
This required an analysis of the additive energy of a given 3\SUM{} instance (defined as the number of quadruples $(a,b,c,d)\in X^4$ such that $a+b=c+d$): for high additive energy they applied Balog-Szemer{\'e}di-Gowers Theorem~\cite {BalogS94} and for smaller additive energy they applied a self reduction based on a very non-trivial hashing.
3\SUM-hardness of 3\SUM{} on Sidon sets also allowed them to establish fascinating new 3\SUM-hardness results for various graph problems, like 4-Cycle Enumeration.
Similar conclusions also follow from the work by Abboud et al. \cite{AbboudBF23} that appeared at the same time as \cite{JinX23}.

Fischer et al. \cite{FischerKP24} showed a deterministic reduction of the universe size for 3\SUM{}, from arbitrarily big to cubic.
We show how to extend their result to all 1- and 3-partite variants of 3\LDT{} and \Conv.

\paragraph{Related work.}
Multiple variants of 3\SUM{} have been also considered, e.g. clustered 3\SUM{} and 3\SUM{} for monotone sets in 2D that are surprisingly solvable in truly subquadratic time~\cite{ChanL15}; algebraic 3\SUM{}, a generalization which replaces the sum function by a constant-degree polynomial~\cite{BarbaCILOS19}; and 3\SUM{}$^+$ in which, given three sets $A,B,S$ one needs to return $(A+B)\cap S$~\cite{GronlundP18,BaranDP08}.
An interesting generalisation of the 3\SUM{} conjecture states that there is no algorithm preprocessing two lists of $n$ elements
$A,B$ in $n^{2-\Omega(1)}$ time and answering queries ``Does $c$ belong to $A+B$?'' in $n^{1-\Omega(1)}$ time.
Very recently, this conjecture was falsified in two independent papers~\cite{DBLP:journals/corr/abs-1907-11206,DBLP:journals/corr/abs-1907-08355}.

Surprisingly few reductions to 3\SUM{} are known.
It is known that 3\SUM{} is equivalent to convolution 3\SUM{}~\cite{Patrascu10}, which is widely used in the proofs of 3\SUM{}-hardness.
Interestingly, 3\SUM{} that is solved in $\Oh(n^2)$ time is fine-grained equivalent to \textsf{MonoConvolution}, which is solved in $\Oh(n^{3/2})$ time \cite{Lincoln0W20}.
In \textsf{MonoConvolution} we are given three sequences $a,b,c$ and for every index $i$ ask if there is $j$ such that $a_j=b_{i-j}=c_i$.
In addition, Jafargholi and Viola~\cite{JafargholiV16} showed that solving 3\SUM{} in $\tilde{\Oh}(n^{1+\eps})$ time for some $\eps < 1/15$ would lead to a surprising algorithm for triangle listing.

\section{Technical overview}\label{sec:overview}

In this work, we show a series of subquadratic reductions between different generalizations of 3\SUM{}  and \Convolution 3\SUM{}. A subquadratic reduction is formally defined as follows:

\begin{definition}[cf. \cite{WilliamsW18}]\label{def:reduction}
 Let $A$ and $B$ be computational problems with a common size measure $m$ on inputs.
 We say that there is a \emph{subquadratic reduction} from $A$ to $B$ if there is an algorithm $\mathcal{A}$ with oracle access to $B$, such that for every $\eps>0$ there is $\delta>0$ satisfying three properties:
 \begin{enumerate}
  \item For every instance $x$ of $A$, $\mathcal{A}(x)$ solves the problem $A$ on $x$.
  \item $\mathcal{A}$ runs in $\Oh(m^{2-\delta})$ time on instances of size $m$.
  \item\label{item:sizes-analysis} For every instance $x$ of $A$ of size $m$, let $m_i$ be the size of the $i$-th oracle call to $B$ in $\mathcal{A}(x)$.
  Then $\sum_im_i^{2-\eps}\leq m^{2-\delta}$.
 \end{enumerate}
We use the notation $A\red B $ to denote the existence of a subquadratic reduction from $A$ to $B$. If $A\red B$ and $B \red A$, we say that $A$ and $B$ are \emph{subquadratic-equivalent} and denote it $A \eqq B$.
\end{definition}

\paragraph{Formal definitions of 3\LDT{}, 3\SUM{}, and \AVG{}.} 
We work with the following formulations of 1- and 3-partite 3\LDT, where $\bar \alpha$ denotes a triple $(\alpha_1,\alpha_2,\alpha_3)$:

\defproblemwithparameters{1-partite 3\LDT{}$_c(1,\bar \alpha, t)$}
{Set $X$ of $n$ numbers over the universe $[-n^c,n^c]$.}	
{distinct $x_1,x_2,x_3\in X$ such that ${\sum_{i=1}^3 \alpha_i x_i=t}$}
{Integer coefficients $\alpha_1,\alpha_2,\alpha_3$ and $t$, real $c\geq2$}

\defproblemwithparameters{3-partite 3\LDT{}$_c(3,\bar \alpha, t)$}
{Sets $S_1,S_2,S_3$ of $n$ numbers over the universe $[-n^c,n^c]$.}
{$x_1\in S_1,x_2\in S_2, x_3\in S_3$ such that $\sum_{i=1}^3 \alpha_i x_i=t$}
{Integer coefficients $\alpha_1,\alpha_2,\alpha_3$ and $t$, real $c\geq2$}

The 3\SUM{}$_c$ problem is defined as 3\LDT{}$_c(p,(1,1,-1), 0)$, where $p = 1$ or $p = 3$ depending on the partity\footnote{While some works use this definition~\cite{Williams15,Chan18,BaranDP08}, other~\cite{GajentaanO95,AilonC05, EricksonBounds99,GoldS17,GronlundP18} define the $3\SUM$ problem as 3\LDT{}$_c(p,(1,1,1), 0)$. It is well-known that the two variants are subquadratic-equivalent.}. The \AVG{}$_c$ problem introduced by Erickson~\cite{Erickson95} is defined as 3\LDT{}$_c(1,(1,1,-2), 0)$.

\paragraph{Formal definitions of \Convolution3\LDT{}, \Convolution3\SUM{}, and \Convolution\AVG{}.}
Let $n$ be an odd number and $n'=\frac{n-1}{2}$.
We consider the following formulation of \Convolution3\LDT{}:

\defproblemwithparameters{1-partite \Conv$_c(1,\bar\alpha,t)$}{Array $A\ind$ of $n$ numbers over the universe $[-n^c,n^c]$.}{distinct $j_1,j_2,j_3 \in \ind$ such that $\sum_{i=1}^3 \alpha_i j_i=t$ and  $\sum_{i=1}^3 \alpha_i A[j_i]=t$}
{Integer coefficients $\alpha_1,\alpha_2,\alpha_3$ and $t$, real $c\geq1$}
\noindent

\defproblemwithparameters{3-partite \Conv$_c(3,\bar\alpha,t)$}{Arrays $A_1,A_2,A_3\ind$ of $n$ numbers over the universe $[-n^c,n^c]$.}{$j_1,j_2,j_3 \in \ind$ such that $\sum_{i=1}^3 \alpha_i j_i=t$ and  $\sum_{i=1}^3 \alpha_i A_i[j_i]=t$}
{Integer coefficients $\alpha_1,\alpha_2,\alpha_3$ and $t$, real $c\geq1$}

Informally, in \Conv{} we require the indices of elements to satisfy the same condition as their values. Analogously to above, the \Convolution3\SUM$_c$ problem is then defined as \Convolution 3\LDT{}$_c(p,(1,1,-1), 0)$, where $p = 1$ or $p = 3$ denotes whether the variant is 1-partite or 3-partite.  The \Convolution\AVG$_c$ problem, first mentioned by Erickson, can be defined as \Convolution3\LDT{}$_c(1,(1,1,-2), 0)$.
We stress that the arrays consist of positive and negative indices.
At the end of Section~\ref{se:equiv-3ldt-conv4ldt} we describe when and why this is necessary.

Finally, we require that $c\geq2$ for 3\LDT{} and $c\geq1$ for \Conv, as otherwise the instance can be solved in subquadratic time using the fast Fourier transform\footnote{We can represent each array as a set $\{A_i[j]\cdot 4 n + j: j\in \ind\}$, as in Lemma \ref{le:conv_to_nonconv}. Then, we can proceed as in~\cite[Exercise 30.1-7]{CormenLRS09}.}.

\paragraph{Our contribution.}
Consider an instance of 3\LDT{} or \Conv. 
Define the size of an instance to be the number $n$ of elements in the considered sets or arrays.

Note that if any of the coefficients $\alpha_i$ is 0, then we need to find at most two numbers satisfying a linear relation, which can be done in  $\Oh(n\log n)$ time for 3\LDT{} (by first sorting and then for every candidate of $x_3$ scanning the sorted sequence with two pointers) and in $\Oh(n)$ time for \Conv{} (by checking all pairs of indices satisfying the relation). Also, if $t\ne 0$ and $\gcd(\alpha_1,\alpha_2,\alpha_3) \nmid t$ then the instance is obviously a NO-instance, and we can return the answer in constant time. This motivates the following definition:

\begin{definition}
We call a variant of 3\LDT$_c$ or \Conv$_c$ problem with coefficients $\bar\alpha$ and $t$ \emph{trivial}, if either 
\begin{enumerate}
  \item Any of the coefficients $\alpha_i$ is zero, or 
  \item $t\ne 0$ and $\gcd(\alpha_1,\alpha_2,\alpha_3) \nmid t$
\end{enumerate}
and otherwise \emph{non-trivial}.
If the above conditions hold, we call the coefficients $\bar\alpha$ and $t$ \emph{trivial}.
\end{definition}

We show that for all $c\geq 2$ all non-trivial variants of 3\LDT{}$_c$ and \Conv$_{c-1}$ are subquadratic-equivalent. Unless stated otherwise, we establish reductions between two variants of 3\LDT{} (or \Conv) with the same value of parameter $c$, that is over the same universe $U = [-n^c,n^c]$.
To avoid clutter, in the reductions that work for all $c\geq2$ we omit the parameter $c$.
For example, one should read the notation $\ldt1\alpha t\red \ldt3\alpha t$ as: for all $c\geq2$ it holds that $\ldtc1\alpha t\red \ldtc3\alpha t$.
Similarly, for \Conv$_c$ we read the statement $\Conv(1,\bar\alpha,t)\red \Conv(3,\bar\alpha,t)$ as: for all $c\geq1$ it holds that $\Conv_c(1,\bar\alpha,t)
\red \Conv_c(3,\bar\alpha,t)$.
All numbers in the considered problems and reductions are integers. All reductions, unless said otherwise, are deterministic. We assume the standard $w$-bit word RAM model. That is, every word consists of $w$ bits, and standard arithmetic and bitwise
operations can be performed in constant time on such words. Internally, the words are unsigned integers from $[0,2^{w})$, however
by using two's complement we can treat them as signed integers from $[-2^{w-1},2^{w-1})$.
We assume that $w\geq \max\{\log n,\log U\}$.

We first establish equivalence for 3\LDT{}:

\begin{theorem}\label{thm:everything-equivalent}
For all $c\geq2$, all non-trivial variants (1- and 3-partite) of 3\LDT{}$_c$ are subquadratic-equivalent.
\end{theorem}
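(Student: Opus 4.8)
Fix $1$-partite $3$-\SUM{} --- i.e.\ $\ldt 1\alpha 0$ with $\bar\alpha=(1,1,1)$ --- as a base problem $B$, and show that every non-trivial variant $V$ ($1$- or $3$-partite) is subquadratic-equivalent to $B$ by exhibiting reductions $V\red B$ and $B\red V$; transitivity then yields Theorem~\ref{thm:everything-equivalent}. Each reduction below turns one instance of size $n$ into $n^{o(1)}$ instances of size $\Oh(n)$ over a polynomial universe and returns the OR of their answers, so each is a subquadratic reduction and subquadratic reductions compose. Only one ingredient is non-routine: a deterministic procedure partitioning a set of integers into $n^{o(1)}$ parts, none of which contains a solution of a prescribed translation-invariant three-variable linear equation; this is where Behrend's construction comes in.

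\textbf{Every non-trivial $V$ reduces to $B$.} Non-triviality says all $\alpha_i\ne0$ and $\gcd(\alpha_1,\alpha_2,\alpha_3)\mid t$. In the $3$-partite problem one may \emph{rescale} --- replacing each $A_i$ by $\{(\ell/\alpha_i)x:x\in A_i\}$, $\ell=\lcm(\alpha_1,\alpha_2,\alpha_3)$, sends $3$-partite $3$-\SUM{} to the $\bar\alpha$-variant with $t=0$, and $A_i\mapsto\{\alpha_i x:x\in A_i\}$ sends it back --- and one may \emph{shift} each $A_i$ independently, which moves $t$ by an arbitrary multiple of $\gcd(\alpha_1,\alpha_2,\alpha_3)$ and hence, by non-triviality, to $0$; so all non-trivial $3$-partite variants are equivalent to $3$-partite $3$-\SUM{}. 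On the $1$-partite side, the folklore color-coding reduction recalled in the introduction --- derandomized by a perfect-hash family of $\Oh(\log n)$ maps $X\to\{1,2,3\}$, run also over the $3!$ matchings of coefficients to colors --- reduces $\ldt 1\alpha t$ to $\Oh(\log n)$ $3$-partite instances on \emph{pairwise-disjoint} sets, which makes the required distinctness automatic in both directions. Together with the classical ``$+\beta_iM$'' reduction from $3$-partite to $1$-partite $3$-\SUM{} (a special case of the gadget below), this gives $V\red B$ for every non-trivial $V$.

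\textbf{$B$ reduces to every non-trivial $V$: the blocking gadget.} It suffices to reduce $3$-partite $3$-\SUM{} to an arbitrary non-trivial $\ldt 1\alpha t$. Rescale $3$-partite $3$-\SUM{} to $\ldt 3\alpha 0$, then blow up: pick $M=\poly(n)$ huge, integers $\beta_1,\beta_2,\beta_3$ pairwise distinct and nonzero with $\sum_i\alpha_i\beta_i=0$, integers $r_i$ with $\sum_i\alpha_i r_i=t$ (available since $\gcd(\alpha_1,\alpha_2,\alpha_3)\mid t$) and a constant $c\ge1$, and set $X=\bigcup_{i=1}^{3}\{\beta_iM+c\cdot a+r_i:a\in A_i\}$; query $\ldt 1\alpha t$ on $X$. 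The three terms of any solution lie in blocks recorded by a map $f\colon\{1,2,3\}\to\{1,2,3\}$, and since $M$ dominates everything else the block labels must satisfy $\sum_j\alpha_j\beta_{f(j)}=0$ and then the low-order parts must satisfy $c\sum_j\alpha_j a_j=t-\sum_j\alpha_j r_{f(j)}$. For $f=\mathrm{id}$ this is precisely $\sum_j\alpha_j a_j=0$, an honest $3$-partite solution. Inspecting the at most $27$ maps $f$: a permutation not fixing $\bar\alpha$, and any non-constant $f$, makes $\sum_j\alpha_j\beta_{f(j)}$ a linear form not proportional to $\sum_i\alpha_i\beta_i$ (its support is strictly smaller, or the coefficient multiset is arranged differently), hence a hyperplane in $(\beta_1,\beta_2,\beta_3)$-space that a constant-bounded generic $\bar\beta$ avoids; a permutation fixing $\bar\alpha$ is harmless, giving the same solution with roles relabelled; and the constant map $f\equiv i$ gives $\sum_j\alpha_j\beta_{f(j)}=(\alpha_1+\alpha_2+\alpha_3)\beta_i$, nonzero unless $\alpha_1+\alpha_2+\alpha_3=0$, while if $\alpha_1+\alpha_2+\alpha_3=0$ and $t\ne0$ we take $c>1$ not dividing $t$, so $f\equiv i$ would force $\sum_j\alpha_j a_j=t/c\notin\mathbb Z$. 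The one genuinely unavoidable case is $\alpha_1+\alpha_2+\alpha_3=0$ together with $t=0$ (forcing $c=1$, $r_i=0$) --- exemplified by reducing $3$-\SUM{} to \AVG{} --- where $f\equiv i$ corresponds to three distinct elements of $A_i$ solving the translation-invariant equation $\alpha_1x+\alpha_2y+\alpha_3z=0$, which no choice of $M,\bar\beta,c$ can forbid.

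\textbf{The hard ingredient: Behrend.} To remove those spurious solutions, precede the gadget by partitioning each $A_i$ into parts none of which contains a distinct triple solving $\alpha_1x+\alpha_2y+\alpha_3z=0$. Since $\alpha_1+\alpha_2+\alpha_3=0$ and all $\alpha_i\ne0$, this is a non-degenerate translation-invariant three-variable equation, and a Behrend-type construction (integers with bounded base-$b$ digits whose digit vectors lie on a fixed sphere, with $b$ depending only on $\bar\alpha$) yields a solution-free subset of $\{-n^3,\dots,n^3\}$ of density $2^{-\Oh(\sqrt{\log n})}=n^{-o(1)}$; by translation invariance, averaging over translates shows that \emph{every} subset of size $m$ contains a solution-free subset of size $m\cdot n^{-o(1)}$, so greedy peeling splits each $A_i$ into $k=n^{o(1)}$ solution-free parts. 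Running the gadget on each of the $k^3=n^{o(1)}$ triples of parts then faithfully decides the $3$-partite instance --- every $3$-partite solution lies inside one triple of parts, and inside a triple of parts the only surviving block-pattern is $f=\mathrm{id}$ (together with the harmless $\bar\alpha$-fixing permutations). This closes all reductions, proving Theorem~\ref{thm:everything-equivalent}. I expect the real difficulty to be this last step done \emph{constructively and efficiently}: turning Behrend's set into an explicit coloring with $n^{o(1)}$ colors whose value at any integer is computable in polylogarithmic time, so that partitioning an $n$-element set costs $n^{1+o(1)}$ rather than time polynomial in the universe size; a secondary point needing care is the full case analysis of block-patterns in the gadget, in particular how coincidences among the $\alpha_i$ (which force $\bar\alpha$-fixing permutations) interact with the partitioning.
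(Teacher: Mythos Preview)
Your proposal is correct and follows essentially the same architecture as the paper: equivalence of all $3$-partite variants by rescaling and shifting, $1$-partite to $3$-partite via color-coding, and $3$-partite to $1$-partite via a $\bar\beta$-gadget whose only unavoidable obstruction---constant block-patterns when $\sum_i\alpha_i=0$ and $t=0$---is removed by a Behrend-based partition into solution-free parts, with the deterministic efficient construction (which the paper carries out via the method of conditional expectations over shifts of the explicit Behrend sets~$Q_r$) correctly identified as the crux. Your $c\nmid t$ trick for the case $\sum_i\alpha_i=0$, $t\neq0$ is a mild simplification of the paper's two-level gadget $C^2x+C\gamma_i+y_i$, but otherwise the arguments coincide.
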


\noindent In particular, this implies the following.

\begin{corollary}
For all $c\geq2$, \AVG{}$_c$ is subquadratic-equivalent to 3\SUM{}$_c$.
\end{corollary}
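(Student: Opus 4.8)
The plan is to derive this corollary as a direct instantiation of Theorem~\ref{thm:everything-equivalent}, so the proof should be very short. First I would recall that \AVG{} is, by the formal definitions given above, exactly the variant $\ldt 1 \alpha 0$ with $\bar\alpha = (1,1,-2)$, and that 3-\SUM{} is the variant $\ldt 1 \alpha 0$ with $\bar\alpha = (1,1,1)$. So both problems are specific variants of $1$-partite 3-\LDT{}.

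The only thing to check is that the \AVG{} variant is \emph{non-trivial} in the sense of the definition above, so that Theorem~\ref{thm:everything-equivalent} applies to it. For $\bar\alpha = (1,1,-2)$ none of the coefficients is zero, so the first triviality condition fails; and since $t = 0$, the second condition (which requires $t \ne 0$) cannot hold either. Hence \AVG{} is a non-trivial variant of 3-\LDT{}. The same reasoning shows 3-\SUM{} is non-trivial: its coefficients $(1,1,1)$ are all nonzero and $t=0$.

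Theorem~\ref{thm:everything-equivalent} then states that all non-trivial variants of 3-\LDT{} are subquadratic-equivalent; applying it to the two variants just identified yields that \AVG{} and 3-\SUM{} are subquadratic-equivalent, which is the claimed corollary. There is essentially no obstacle here: the entire content has been packaged into Theorem~\ref{thm:everything-equivalent}, and the corollary is merely the observation that the two named problems fall under its scope. (If one wanted a self-contained argument for just this case rather than citing the general theorem, the hard direction would be the reduction from 3-\SUM{} to \AVG{}, which is precisely the open question of Erickson resolved by the main theorem; but given the theorem, no further work is needed.)
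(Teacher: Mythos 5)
Your proof is correct and matches the paper's intent exactly: the paper presents the corollary as an immediate consequence of Theorem~\ref{thm:everything-equivalent} ("In particular, this implies the following"), and your verification that both \AVG{} and 3-\SUM{} are non-trivial variants (all coefficients nonzero, $t=0$) is precisely what is needed to apply the theorem.
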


\noindent Thus, we completely resolve both Question~\ref{q:avg} and Question~\ref{q:ldt}. In order to design the most interesting of our reductions, from $\ldt 3 \alpha 0$ to $\ldt 1 \alpha 0$, we make use of progression-free sets.
We call a set $S\subseteq \{1,2,\ldots,n\}$ \emph{progression-free}
if it contains no non-trivial arithmetic progression, that is, three distinct elements $a,b,c$ such that $a+b-2c=0$. Erdős and Turan~\cite{ErdosT36}
introduced the question of exhibiting a dense subset with such a property, and presented a construction with $\Omega(n^{\log_{3}2})$
elements. This was improved by Salem and Spencer~\cite{SalemS42} to $n^{1-\Oh(1/\log\log n)}$, and then by
Behrend~\cite{Behrend46} to $\Omega(n / (2^{2\sqrt2\cdot\sqrt{\log n}}\cdot\log^{1/4}n))$. More recently,
Elkin~\cite{Elkin10} showed how to construct a set consisting of $\Omega(n \log^{1/4}n / 2^{2\sqrt2\cdot\sqrt{\log n}})$
elements. One could naturally ask for a dense subset that avoids a certain linear equation $\alpha_{1}x_{1}+\alpha_{2}x_{2}=(\alpha_{1}+\alpha_{2})x_{3}$, where $\alpha_{1},\alpha_{2}$ are positive integers. Indeed, it turns out that Behrend's argument works with minor modifications also for such equations~\cite[Theorem 2.3]{Ruzsa93}. We use an extension of this argument to partition an arbitrary set into a small number of progression-free sets. Average-free sets have been already successfully applied in various areas of theoretical computer science~\cite{ChandraFL83b,CoppersmithW90,HastadW03,WilliamsW13,DellM14,AlonFKS00,AbboudB17,FominGLS14,JansenKMS13}.
In particular, Behrend-like constructions led to conditional lower bounds providing e.g. a reduction from $k$-Clique to $k^2$-SUM \cite{AbboudLW14}, from $k$-SAT to Subset Sum~\cite{AbboudBHS19} and for scheduling problems \cite{DBLP:journals/corr/abs-2003-07113}.

We then extend our techniques to show equivalences for \Conv:

\begin{theorem}\label{th:every-convldt-equivalent}
For all $c\geq1$, all non-trivial variants (1- and 3-partite) of \Conv$_c$ are subquadratic-equivalent. 
\end{theorem}

One can suspect that 3\LDT{} is connected to \Conv{} on a smaller universe as the indices of elements can also convey some information.
We show that this intuition is in fact fully correct.
By adjusting a folklore reduction from \Conv{} to 3\LDT{} \cite{ChanH20} and a modification of \Patrascu's \cite{Patrascu10} and Chan and He's \cite{ChanH20} reduction from 3\SUM{} to \Convolution3\SUM{} we obtain that these problems are equivalent when the size of the universe of considered instances differs by a factor of $n$ between convolution and non-convolution variants.
Finally, by combining this with Theorems~\ref{thm:everything-equivalent} and \ref{th:every-convldt-equivalent} we obtain equivalences between all non-trivial 1- and 3-partite variants of \Conv{} and 3\LDT{}, both convolution and non-convolution:

\begin{restatable}{theorem}{AllLDTEquivalences}\label{th:everything-equivalent}
 For every $c\geq2$, all non-trivial 1- and 3-partite variants of 3\LDT$_c$ and of \Conv$_{c-1}$ are subquadratic-equivalent.
%  $$\mathop{\forall}_{\substack{c\geq2,e_1,e_2\in\{1,3\}\\\text{non-trivial }(\bar\alpha_1,t_1),(\bar\alpha_2,t_2)}	}\text{3\LDT}_c(e_1,\bar\alpha_1,t_1) \eqq \Convolution\text{3\LDT}_{c-1}(e_2,\bar\alpha_2,t_2)$$
\end{restatable}

In the above equivalences we only consider polynomial-size universes. 
Of course one can also consider problems with two parameters: number of elements $n$ and the size of the universe $U$, possibly much bigger than $n$.
However, it turns out that instances of 3\LDT{} over a universe bigger than cubic can be deterministically reduced to instances over the cubic universe.
This follows by extending the result of Fischer et al. \cite{FischerKP24} as explained in detail in Section~\ref{se:universe-reduction}.
Consequently, the only interesting variants of 3\LDT{} are $3\SUM_c$ for $c\in[2,3]$.
Instances over larger universes for any non-trivial variant are equivalent to 3$\LDT_3$, while instances over smaller universes  or of any trivial variant can be solved in subquadratic time.

\begin{restatable}{theorem}{AllLDTAtMost}
 For every $U\geq n^3,p\in\{1,3\}$ and non-trivial coefficients $\bar\alpha,t$, $3\LDT_?(p,\bar\alpha,t)$ over $[-U,U]$ is subquadratic-equivalent to $3\LDT_3(p,\bar\alpha,t)$.
\end{restatable}

By combining this result with reductions between 3\LDT{} and \Conv{}, we can draw a similar conclusion for \Conv{}, that the only interesting non-trivial variants are $\Convolution3\SUM_c$ for $c\in[1,2]$.

\begin{restatable}{theorem}{AllConvLDTAtMost}
 For every $U\geq n^3,p\in\{1,3\}$ and non-trivial coefficients $\bar\alpha,t$, $\Conv_?(p,\bar\alpha,t)$ over $[-U,U]$ is subquadratic-equivalent to $\Conv_2(p,\bar\alpha,t)$.
\end{restatable}
\noindent
We stress that all reductions provided in this paper are deterministic.

\section{Equivalences between different variants of 3\LDT}
\label{se:3LDT}
In this section, we show Theorem~\ref{thm:everything-equivalent} through a series of reductions depicted in Figure~\ref{fig:ldtreductions}.
In order to show a subquadratic reduction $A\red B$ we often present only a reduction from an instance of~$A$ to a number of instances of $B$.
If the analysis of the sizes of the obtained instances (whether they satisfy Property \ref{item:sizes-analysis} of Definition~\ref{def:reduction} or not) is immediate, we omit it.
Here and below we use the following notation: when writing $\sum_i$ we mean the sum over all possible values of $i$; $[k]=\{1,2,\ldots,k\}$; $f[A]=\{f(a):a\in A\}$ is the image of $f$ over $A$;  sumset $A+B$ is defined as $\{a+b:a\in A,b\in B\}$ and in particular we can add an element to a set: $A+x=A+\{x\}=\{a+x:a\in A\}$.

\begin{figure}[h]
    \centering
    \begin{tikzpicture}
    \begin{scope}[every node/.style={align=center},
                  every edge/.style={draw=black}]

	\node (n21) at (4.5,0) {$\ldt 3 {\alpha_2} {t_2}$};
    \node (n23) at (4.5,1.5) {$\ldt 3 {\alpha_1} {t_1}$};
    \node (n31) at (9,0) {$\ldt 1 {\alpha_2} {t_2}$};
    \node (n33) at (9,1.5) {$\ldt 1 {\alpha_1} {t_1}$};
        
        \path [->] (n33) edge node[below] {\tiny{Lem. \ref{lm:from_1_to_3}}} (n23);
        \path [<->] (n21) edge node [left] {\tiny{Lem.~\ref{le:3-partite-equivalent}}} (n23);
        \path [->] (n21) edge node [below] {\tiny{Lem.~\ref{le:from_3_to_1_nonzero_t}~\&~\ref{le:from_3_to_1_zero_t}}} (n31);
    \end{scope}
    \end{tikzpicture}
    \caption{Subquadratic reductions between different variants of 3\LDT. Subscripts to coefficients $\alpha$ and $t$ denote if the reduction allows changing the coefficients or not. 
    In all the reductions the value of parameter $c$ is preserved and hence not shown.
}
    \label{fig:ldtreductions}
\end{figure}

We start by showing technical lemmas that allow us to handle instances of 3\LDT{} in which sets have at most $n$ elements and their elements can exceed the $[-n^c,n^c]$ range, but are within the range $[-dn^c,dn^c]$ for some constant $d\geq 1$.
We call such instances \emph{semi-instances}.
In such a case we add a number of elements that will never be a part of triple $\bar x$ of elements satisfying $\sum_i\alpha_ix_i=t$.

\begin{proposition}\label{prop:finding_betas}
For any triple $\bar\alpha$ of nonzero coefficients we can find in constant time integer coefficients $\bar\beta$ such that $\forall_{\emptyset \ne S \subseteq [3]} \sum_{i\in S} \alpha_i \beta_i \ne 0$.
 \end{proposition}
 \begin{proof}
 We need to find coefficients $\bar\beta$ satisfying all seven conditions of the form $\sum_{i\in S} \alpha_i \beta_i \ne 0$ for $\emptyset \ne S \subseteq [3]$.
  First we set $\beta_1=1$.
  As $\alpha_i\ne0 $, we have $\alpha_1\beta_1\ne 0$.
  Next, by setting $\beta_2=\max\{0,\lceil\frac{\alpha_1\beta_1}{-\alpha_2}\rceil\}+1$ we guarantee that $\alpha_2\beta_2\ne 0$ and $\alpha_1\beta_1+\alpha_2\beta_2\ne 0$.
  Finally, we set $\beta_3=\max\{0,\lceil\frac{\alpha_1\beta_1}{-\alpha_3}\rceil, \lceil\frac{\alpha_2\beta_2}{-\alpha_3}\rceil, \lceil\frac{\alpha_1\beta_1+\alpha_2\beta_2}{-\alpha_3}\rceil  \}+1$ and fulfill all the conditions with $3 \in S$.
  Hence the above choice of $\bar\beta$ fulfills the required seven conditions.
 \end{proof}

 \begin{lemma}\label{le:handle_semi_instance}
  Let $\bar\alpha$ be non-zero coefficients, $c\geq2,d\geq1$ and $t$ be arbitrary.
  Given a semi-instance of $\ldtc3\alpha t$ with sets $S_1,S_2,S_3$ of at most $n$ elements from $[-n^c,n^c]$, we can construct in $\Oh(n)$ time an equivalent instance of $\ldtc3\alpha t$, with sets $S_1',S_2',S_3'$ of $n'=\Oh(n)$ elements from $[-\frac1d(n')^c,\frac1d(n')^c]$ such that there exists a triple $\bar x\in S_1\times S_2\times S_3$ of elements such that $\sum_i\alpha_ix_i=t$ iff there exists a triple $\bar x'\in S_1'\times S_2'\times S_3'$ such that $\sum_i\alpha_i x_i'=t$.
 \end{lemma}
 \begin{proof}
  By Proposition~\ref{prop:finding_betas} we can find $\bar \beta$ such that $\forall_{\emptyset \ne S \subseteq [3]} \sum_{i\in S} \alpha_i \beta_i \ne 0$ and let $\beta^*=\max_i|\beta_i|$.
  Let $M=2n^c\cdot\sum|\alpha_i|$ and $n'$ satisfy $(n')^c\geq2\beta^*Md$, so $n'=\lceil n(4d\beta^*\sum_i|\alpha_i|)^{1/c}\rceil$.
  Clearly $n' \geq n$ and for sufficiently big $n$ we have $n'\leq n^c$.
  We set $S_i'=S_i\cup \{\beta_i M + j: j\in\{1,2,\ldots,n'-|S_i|\}\}$.
  Now we show that the sets $S_i'$ satisfy the desired properties.
  Clearly, the above construction runs in $\Oh(n)$ time and sets $S_i'$ have $n'=\Oh(n)$ elements each.
  Next, $S_i'\subseteq [-\frac1d(n')^c,\frac1d(n')^c]$ because $\frac1d(n')^c\geq 2\beta^*M\geq n^c$, so $S_i \subseteq [-n^c,n^c] \subseteq [-\frac1d(n')^c,\frac1d(n')^c]$ and $S_i'\setminus S_i \subseteq [\beta_iM +1,\beta_iM_n']\subseteq [-2\beta^*M,2\beta^*M]$.
  If there is a triple $\bar x$ such that $x_i\in S_i$ and $\sum \alpha_ix_i = t$ then it is also a solution for $S_1',S_2',S_3'$.

  Now we show the opposite direction.
  Suppose there is a triple $\bar x'$ such that $x_i'\in S_i'$ and $\sum \alpha_ix_i' = t$.
  Define $\bar \psi$ as $\psi_i=\beta_i$ if $x_i'\in S_i'\setminus S_i$ and $\psi_i=0$ if $x_i'\in S_i$.
  Then $|\sum\alpha_ix_i'-\sum\alpha_i\psi_iM| \leq \sum|\alpha_i|n^c = M/2$.
  By Proposition~\ref{prop:finding_betas} $\sum\alpha_i\psi_iM =0$ iff $\psi_1=\psi_2=\psi_3=0$.
  Otherwise $|\sum\alpha_i\psi_iM|\geq M$ so $|\sum \alpha_ix_i|\geq M/2$ which contradicts the assumption that $\sum \alpha_ix_i=t$.
  Hence for every triple $\bar x'$ such that $x_i'\in S_i'$ and $\sum \alpha_ix_i=t$ we have that $x_i'\in S_i$ which concludes the proof.
 \end{proof}

Now we show equivalence of all non-trivial 3-partite variants of 3\LDT{} and a reduction from 1-partite variant to 3-partite variant.
To avoid clutter, till the end of this section we do not write the parameter $c$ in the description of the considered variants as $c$ is preserved in all the presented reductions and satisfies $c\geq 2$.

\begin{lemma}\label{le:3-partite-equivalent}
All non-trivial 3-partite variants of 3\LDT{} are subquadratic-equivalent.
\end{lemma}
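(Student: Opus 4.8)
**Plan for proving Lemma \ref{le:3-partite-equivalent}: All non-trivial 3-partite variants of 3-LDT are subquadratic-equivalent.**

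The goal: show that for any two non-trivial parameter tuples $(\bar\alpha, t)$ and $(\bar\beta, s)$, we have $\ldt{3}{\alpha}{t} \equiv_2 \ldt{3}{\beta}{s}$. The natural strategy is to reduce every non-trivial 3-partite variant to a single canonical one — say 3-partite 3-SUM, i.e. $\ldt{3}{\alpha}{0}$ with $\bar\alpha = (1,1,1)$ — and back. Since the 3-partite setting lets us handle each set independently, the reductions are essentially affine changes of variable on each $A_i$, which is much cleaner than the 1-partite case (no distinctness issue, no color-coding).

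Key steps in order: (1) *Normalize $t$ to $0$.* Given a non-trivial $\ldt{3}{\alpha}{t}$ with $t \ne 0$: by non-triviality $d := \gcd(\alpha_1,\alpha_2,\alpha_3) \mid t$, and all $\alpha_i \ne 0$. Pick integers $c_1, c_2, c_3$ with $\sum_i \alpha_i c_i = t$ (extended Euclid; these are constants depending only on the parameters). Replacing each $A_i$ by $A_i' = A_i - c_i$ turns the equation $\sum \alpha_i x_i = t$ into $\sum \alpha_i x_i' = 0$, so $\ldt{3}{\alpha}{t} \equiv_2 \ldt{3}{\alpha}{0}$; the universe grows only by an additive constant, absorbed by the universe-shrinking reduction of Section~\ref{se:decreasing-universe}. (2) *Handle sign and the fact that coefficients are now arbitrary nonzero integers summing in an equation $= 0$.* We want to reach $\bar\alpha = (1,1,1)$. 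First, multiplying the whole equation by $-1$ if needed, assume not all $\alpha_i$ are negative. (3) *Clear coefficients by scaling variables.* For a target equation $y_1 + y_2 + y_3 = 0$: given $\ldt{3}{\alpha}{0}$, set $L = \lcm(|\alpha_1|,|\alpha_2|,|\alpha_3|)$ and define $B_i = \{ \operatorname{sign}(\alpha_i)\cdot (L/|\alpha_i|)\cdot x : x \in A_i\}$; then $\sum_i \alpha_i x_i = 0 \iff \sum_i \operatorname{sign}(\alpha_i)(L/|\alpha_i|)\alpha_i x_i = \pm L \sum_i (\text{new } y_i)/|\alpha_i|\cdot\ldots$ — more carefully, after scaling each coordinate the equation becomes $\pm y_1 \pm y_2 \pm y_3 = 0$ with some signs. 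This reduces to the case of coefficients in $\{+1,-1\}$ summing to zero — but $\pm1\pm1\pm1 = 0$ is impossible, which signals that I've conflated two genuinely different equivalence classes. The correct canonical targets are 3-SUM, $(1,1,1)$, and "2+1", i.e. $(1,1,-2)$-type (or more generally $(a,b,-(a+b))$); so the reduction must aim at the equation whose coefficients, after dividing by $\gcd$, match up to sign. The real content is: any nonzero-coefficient equation $\alpha_1 x_1 + \alpha_2 x_2 + \alpha_3 x_3 = 0$ over three *independent* sets is equivalent to 3-SUM. To see $\ldt{3}{\alpha}{0} \red$ 3-partite 3-SUM: scale $A_i \mapsto \alpha_i A_i$ (for $\alpha_i > 0$) or $A_i \mapsto \alpha_i A_i$ absorbing the sign into the set (for $\alpha_i < 0$, the image may contain negatives, still fine) — concretely $B_i = \{\alpha_i x : x \in A_i\}$, and then $\exists x_i \in A_i: \sum \alpha_i x_i = 0 \iff \exists y_i \in B_i : y_1 + y_2 + y_3 = 0$. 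This is an exact, single-oracle-call reduction, universe blows up by factor $\max|\alpha_i|$, again absorbed. For the reverse direction, 3-partite 3-SUM $\red \ldt{3}{\alpha}{0}$: now invert the scaling — set $B_i = \{x/\alpha_i : x \in A_i,\ \alpha_i \mid x\}$ — but divisibility may fail, so instead first map 3-SUM to the sublattice: replace $A_i$ by $A_i' = \{ (\operatorname{lcm}_j\alpha_j / \alpha_i)\cdot x : x\in A_i\}$ so that each element is divisible by $\alpha_i$ while the equation $y_1+y_2+y_3=0$ is preserved up to the common factor $\operatorname{lcm}$, then divide coordinate $i$ by $\alpha_i$.

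The main obstacle I expect: making the universe-size bookkeeping rigorous so that Definition~\ref{def:reduction}'s third condition ($\sum_i m_i^{2-\eps} \le m^{2-\delta}$) holds. Each individual reduction makes only $O(1)$ oracle calls on inputs of the same size $n$ up to constant factors, and multiplies the universe by a constant depending only on the fixed parameters $\bar\alpha,\bar\beta,t,s$ — not on $n$ — so after applying the Section~\ref{se:decreasing-universe} reduction a constant number of times we return to universe $\{-n^3,\dots,n^3\}$; chaining a constant number of such steps preserves the subquadratic-reduction property. The only subtlety is that "constant depending on parameters" is legitimate because $\bar\alpha$ etc. are part of the problem definition, not the input, so I would state this explicitly. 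A secondary check: confirm that non-triviality is preserved at every intermediate step (all coefficients stay nonzero under scaling; $t=0$ trivially satisfies the gcd condition), so we never accidentally land in a trivial variant.
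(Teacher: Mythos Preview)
Your proposal is correct and follows essentially the same approach as the paper: shift each $A_i$ by an integer $c_i$ with $\sum_i \alpha_i c_i = t$ to normalize $t$ to $0$, then rescale each coordinate to change the coefficients. The paper does the coefficient change in a single direct step---for $\ldt{3}{\alpha}{0}\red\ldt{3}{\beta}{0}$ it sets $A_i' = \{x\cdot \alpha_i q/\beta_i : x\in A_i\}$ with $q=\lcm(\beta_1,\beta_2,\beta_3)$---which sidesteps your detour through $(1,1,1)$ and the (ultimately harmless) confusion about sign patterns and ``two equivalence classes.''
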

\begin{proof}
We need to show a reduction between any two non-trivial 3-partite variants of 3\LDT{}.
To this end, we establish three reductions: $\ldt 3\alpha0\red\ldt 3\alpha t$ and $\ldt 3\alpha t\red\ldt 3\alpha 0$  and
finally $\ldt 3\alpha0\red\ldt 3\beta0$ for $\bar\alpha,\bar\beta$ and $t\ne 0$ such that all the considered variants are non-trivial.
In each of those reductions we need to make sure that we obtain an instance of 3\LDT{} and not a semi-instance.
Each of them requires the same additional clean-up stage that we present at the end of this proof.
Reductions between other variants can be obtained by combining at most three of the above.

\begin{enumerate}
\item $\ldt 3\alpha0\red\ldt 3\alpha t$. We have
$\gcd(\alpha_1,\alpha_2,\alpha_3) | t$ because $\bar\alpha$ and $t$ are non-trivial coefficients, so by the Chinese remainder theorem there exists an integer triple $\bar y$ such that $\sum_i \alpha_i y_i=t$.
Given the three sets $S_1,S_2,S_3$ we construct three sets $S_1',S_2',S_3'$ where $S'_i=\{x+y_i:x\in S_i\}$.
Then there is $\bar x\in S_1\times S_2\times S_3$ satisfying $\sum_i \alpha_i x_i=0$ iff there is $\bar x'\in S'_1\times S'_2\times S'_3$ satisfying $\sum_i \alpha_i x'_i=t$.
 \item $\ldt 3\alpha t\red\ldt 3\alpha 0$. As above but by subtracting the $y_i$ terms.
 \item $\ldt 3\alpha0\red\ldt 3\beta0$. Define $q=\lcm(\beta_1,\beta_2,\beta_3)$ so that $\frac{\alpha_{i} q}{\beta_{i}}$ is an integer.
Given $S_1,S_2$ and $S_3$ we construct $S_1',S_2'$ and $S_3'$ by setting $S'_i=\{x\frac{\alpha_i q}{\beta_i}:x\in S_i\}$.
\end{enumerate}
Recall that all the considered variants of 3\LDT{} are over the same universe $[-n^c,n^c]$ for some $c\geq2$.
Now we need to handle the situation that sets $S_1',S_2',S_3'$ form a semi-instance of 3\LDT, because after the above transformation the elements can be outside the range $[-n^c,n^c]$.
In the first two cases, the universe increases at most by an additive factor $\max|y_i|=\Oh(1)$, so not more than by  multiplicative factor $2$.
In the third case the universe increases at most by factor $q\cdot\max|\frac{\alpha_i}{\beta_i}|$.
Let $w=\max\{2,q\cdot\max_i|\frac{\alpha_i}{\beta_i}|\}$.
Before applying the reduction, we first apply Lemma~\ref{le:handle_semi_instance} with $d=w$ obtaining sets of $n'=\Theta(n)$ elements from  $[-\frac1d(n')^c,\frac1d(n')^c]$.
Then after multiplying or adding a constant to an element, it still belongs to $[(n')^c,(n')^c]$.
\end{proof}

To show $\ldt1\alpha t\red \ldt3\alpha t$, we apply the folklore reduction from $1$-partite 3$\SUM$ to $3$-partite 3$\SUM$ based on the color-coding technique of Alon et al.~\cite{AlonYZ95}.
For completeness we present the proof below.

\begin{lemma}\label{lm:from_1_to_3}
 For all $\bar \alpha$ and $t$, $\ldt1\alpha t\red \ldt3\alpha t$.
\end{lemma}
\begin{proof}
In this reduction, given one set $X$ we need to create a number of 3-partite instances of 3\LDT{} in such a way that there exist distinct $x_{1},x_{2},x_{3}\in X$ 
satisfying the given equation iff at least one of the 3-partite instances is a YES-instance.
The reduction will not change coefficients $\alpha_i$ and the parameter $t$.
Note that simply creating a single 3-partite instance by making all three sets equal to $X$ does not work,
as we are not able to forbid taking the same element of $X$ more than once.

We use the color-coding technique that was introduced by Alon et al.~\cite{AlonYZ95}, in which we choose a number of colorings of the elements
of $X$ with $k$ colors in such a way that, for every $k$-element subset of $X$, there is a coloring in which all elements from the subset have
distinct colors.
This can be achieved with high probability by simply choosing sufficiently many random colorings, but we will use the deterministic construction by Schmidt and Siegel~\cite{SchmidtS90}.

\begin{fact}[cf.~\cite{SchmidtS90}]\label{fact:color-coding}
There exists a family $F$ of $2^{\Oh(k)}\log^{2} n$ functions $[n]\rightarrow [k]$ such that, for every $k$-element set $Y\subseteq [n]$, there exists a function $f\in F$ with $|f[Y]|=k$. Each function is described by a bit string of length $\Oh(k)+2\log\log n$ and, given
constant-time read-only random access to the bit string describing $f\in F$ and any $x\in[n]$, we can compute $f(x)$ in constant time.
\end{fact}

We work with $k=3$, so the above fact gives us a family $F$ consisting of $\Oh(\log^{2} n)$ functions.
Given a set $X=\{x_1,x_2,\ldots, x_n\}$, for every function $f\in F$ and every permutation $\pi \in S_{3}$ we obtain a 3-partite semi-instance of 3\LDT{} by setting $S_{\pi(i)} = \{x_c: f(c)=i\}$ for $i\in [3]$.
In every 3-partite semi-instance the sets $S_{i}$ correspond to a partition of the original set $X$, and
for any distinct $x_{1},x_{2},x_{3}\in X$ there exists a 3-partite semi-instance such that $x_{1}\in S_{1}$, $x_{2}\in S_{2}$, and $x_{3}\in S_{3}$.
Thus, we showed how to reduce a 1-partite instance of 3\LDT{} to $\Oh(\log^{2} n)$ semi-instances of 3-partite 3\LDT{} with the same coefficients $\bar\alpha$ and $t$. The reduction works in $\Oh(n\log^{2}n)$ time.
Finally, we reduce every semi-instance to an instance of 3-partite 3\LDT{} by Lemma~\ref{le:handle_semi_instance} with $d=1$.
\end{proof}

It remains to show how to reduce an arbitrary non-trivial 3-partite variant of 3\LDT{} to a 1-partite one with the same coefficients $\bar\alpha$ and $t$. Before proceeding to the reduction, we show a few preliminary lemmas. Let~$C$ be a sufficiently big constant to be fixed later. Recall that in definition of $\ldt3\alpha t$ we have three sets $S_i$, and in the definition of $ \ldt1\alpha t$ one set $X$. 
We would like to construct the set $X$ by setting $X=\bigcup_i \{Cx+\gamma_i : x\in S_i\}$, where $\bar\gamma$ are pairwise distinct
coefficients chosen so as to ensure that all triples $\bar x$ consisting of distinct elements from $X$ satisfying $\sum_i \alpha_ix_i=t$ also satisfy that
$x_i$ corresponds to an element of $S_{i}$, for every $i\in[3]$.
For example, for 3\SUM{} we can set $X= \{3C+x: x\in S_1\} \cup \{C+x: x\in S_2\} \cup \{4C+x: x\in S_3\}$.
Now we extend this approach to arbitrary coefficients.

For a triple $\bar x$ of elements from~$X$, an~\textit{origin} is a function $f:[3]\rightarrow [3]$ such that, for every $i\in[3]$, $x_{i}$ corresponds to an element of $S_{f(i)}$.
Clearly $f$ is a function, because $\gamma$'s are pairwise distinct.
For instance, consider $\bar x = (Cx+\gamma_2,Cy+\gamma_3, Cz+\gamma_3)$ where $x\in S_2$ and $y,z\in S_3$.
Then we have $f(1)=2$ and $f(2)=f(3)=3$.
Intuitively, now we would like to find coefficients $\bar \gamma$ such that for every triple $\bar x$ of elements from $X$ such that $\sum_i \alpha_ix_i = t$ their origin is the identity function ($f(i)=i$ for $i\in [3]$), so in particular we need to forbid using more than one number from the same set ($|\{f(i) : i\in[3] \}|<3$).
However, some coefficients from $\bar\alpha$ might be equal, so we also need to allow  origins in which we permute the elements with the same values of $\alpha_i$.
For example, if all coefficients $\alpha$ are equal, we allow every origin that is a permutation of $[3]$, and when $\alpha_1=\alpha_3\ne\alpha_2$, we have two allowed origins: $(1,2,3)$ and $(3,2,1)$.
This is formalized in the following definition.

\begin{definition}
For any coefficients $\bar\alpha$, we call an origin $f$ \emph{allowed} if
$\forall {i\in [3]} \{f(x):x\in[3],\alpha_x=\alpha_i\} = \{x: x\in[3],\alpha_x =\alpha_i\}$, and otherwise we call it \emph{forbidden}. In addition, if $f(1)=f(2)=f(3)$ we call the origin \emph{constant}.
\end{definition}

We show that it is always possible to find a triple $\bar\gamma$ which excludes solutions from most of the forbidden origins.
In other words, we present how to find $\bar\gamma$ such that for every triple $\bar x$ of elements from the constructed set $X$ such that $\sum_i \alpha_ix_i = t$, we have that origin of $\bar x$ is either allowed or constant.
Additionally, we need to ensure that in the allowed origin the summands not multiplied by $C$ cancel out, so we require that $\sum_i \alpha_{i}\gamma_i = 0$.

\begin{lemma}\label{le:finding-gammas}
 For any triple $\bar\alpha$ of nonzero coefficients there exists a triple $\bar\gamma$ of nonzero, pairwise distinct coefficients such that $\sum_i\alpha_i\gamma_i=0$ and for every non-constant forbidden origin~$f$ we have $\sum_i\alpha_i\gamma_{f(i)}\ne 0$.
\end{lemma}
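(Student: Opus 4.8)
The goal is to find nonzero $\bar\gamma=(\gamma_1,\gamma_2,\gamma_3)$ satisfying the single linear constraint $\sum_i \alpha_i\gamma_i=0$ while avoiding finitely many "bad" linear conditions $\sum_i \alpha_i\gamma_{f(i)}=0$, one for each non-constant forbidden combination $f$. I would first observe that the solution space of $\sum_i \alpha_i\gamma_i=0$ over $\mathbb{Q}^3$ is a $2$-dimensional subspace $V$, so it suffices to show that none of the bad conditions, together with the requirement that some $\gamma_i=0$, forces all of $V$. Concretely, for each non-constant forbidden $f$, the condition $\sum_i \alpha_i\gamma_{f(i)}=0$ defines either all of $\mathbb{Q}^3$ (if it is trivially satisfied for the given $\bar\alpha$) or a hyperplane; I would argue that it is never trivially satisfied on all of $V$ — i.e. $\sum_i \alpha_i\gamma_{f(i)}$ is not a scalar multiple of $\sum_i \alpha_i\gamma_i$ as a linear form — precisely because $f$ is forbidden, so the multiset of coefficients landing on a given coordinate $\gamma_j$ differs from $\alpha_j$. (The constant combinations are exactly the ones we must exclude, since for a constant $f\equiv j$ the form $\sum_i\alpha_i\gamma_{f(i)}=(\sum_i\alpha_i)\gamma_j$ can vanish identically when $\sum_i\alpha_i=0$.)

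**Carrying it out.** Having shown each bad form is genuinely non-proportional to the constraint form on $V$, each bad condition cuts out a proper subspace (a line) of the plane $V$, and likewise each of the three conditions $\gamma_i=0$ cuts out a line of $V$ (none of the $\alpha_i$ being zero, so $\gamma_i=0$ is not implied by the constraint). There are only finitely many such lines — at most the $2\cdot 27$ forbidden combinations plus three — so their union cannot cover the $2$-dimensional space $V$ over the infinite field $\mathbb{Q}$. Therefore a rational point $\bar\gamma\in V$ exists outside all of them; clearing denominators makes it an integer triple with all $\gamma_i\neq 0$, $\sum_i\alpha_i\gamma_i=0$, and $\sum_i\alpha_i\gamma_{f(i)}\neq 0$ for every non-constant forbidden $f$, which is exactly the claim.

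**The main obstacle.** The delicate point — the only place any real content lives — is proving that for a non-constant forbidden $f$, the linear form $L_f(\bar\gamma):=\sum_i\alpha_i\gamma_{f(i)}$ is not proportional to $L_{\mathrm{id}}(\bar\gamma)=\sum_i\alpha_i\gamma_i$. Writing $L_f(\bar\gamma)=\sum_{j=1}^3 c_j\gamma_j$ with $c_j=\sum_{i:\,f(i)=j}\alpha_i$ (and $c_j=0$ if $j\notin \mathrm{Im}(f)$), proportionality to $L_{\mathrm{id}}$ would force $c_j=\lambda\alpha_j$ for all $j$ and a common $\lambda$. Since $f$ is non-constant, $|\mathrm{Im}(f)|\in\{2,3\}$, and summing over $j$ gives $\sum_j c_j=\sum_i\alpha_i=\lambda\sum_j\alpha_j$, so $\lambda=1$ whenever $\sum_i\alpha_i\neq 0$; when $\sum_i\alpha_i=0$ one has to rule out $\lambda=0$ separately, i.e. that $L_f$ does not vanish identically — but $L_f\equiv 0$ would mean each $c_j=0$, which together with non-constancy of $f$ and nonzero $\alpha_i$ forces $\mathrm{Im}(f)=\{1,2,3\}$ (else some singleton preimage gives $c_j=\alpha_i\neq 0$) and a $3$-cycle or transposition on coefficients summing to zero in each class — a short finite case check using that $f$ being \emph{forbidden} means some coefficient class is not preserved. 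I would handle these few cases by hand, exploiting that the forbidden condition is exactly the negation of "$f$ permutes each level set of $\bar\alpha$", which is what prevents $c_j=\alpha_{f^{-1}(j)}$-type cancellations from conspiring to make $L_f=L_{\mathrm{id}}$.
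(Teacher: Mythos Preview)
Your approach is essentially the same as the paper's: both view the constraint $\sum_i\alpha_i\gamma_i=0$ as a plane $\Gamma_{\mathrm{id}}$ in $3$-space, argue that each non-constant forbidden $f$ (and each coordinate hyperplane $\gamma_i=0$) cuts out a genuinely different plane, and conclude that the resulting finitely many intersection lines cannot cover $\Gamma_{\mathrm{id}}$ over the rationals. The paper's version is slightly more concrete---it fixes an explicit line $\ell\subset\Gamma_{\mathrm{id}}$ not through the origin, notes each forbidden plane meets $\ell$ in at most one point, and picks a remaining rational point---and it is in fact \emph{less} careful than you about the step $\Gamma_f\neq\Gamma_{\mathrm{id}}$, which it simply asserts follows ``by the definition of a forbidden combination.''
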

\begin{proof}
Consider the 3-dimensional space $\mathbb{Q}^{3}$.
Clearly, the set of all triples $\bar\gamma$ such that $\sum_i\alpha_i\gamma_i=0$ spans a plane there, we denote it $\Gamma_{id}$.
There are less than $3^3=\Oh(1)$ non-constant forbidden origins $f$ and each of them corresponds to an equation
$\sum_i\alpha_i\gamma_{f(i)}=0$ that must be avoided, which also corresponds to a forbidden plane $\Gamma_f$.
By the definition of a forbidden origin $f$, we have $\Gamma_f\ne \Gamma_{id}$.
Moreover, even if $\sum_i\alpha_i=0$, $\Gamma_f$ is not the whole space $\mathbb{Q	}^{3}$, as $f$ is a non-constant configuration.
Next, as we need all the coefficients $\gamma_i$ to be nonzero, we add forbidden planes $\Gamma_i=\{\bar\gamma: \gamma_i=0\}$, for $i\in [3]$.
Similarly, as we need all the coefficients $\gamma_i$ to be pairwise distinct, we add forbidden planes
$\Gamma'_i=\{\bar\gamma: \gamma_i=\gamma_{(i+1)\bmod 3+1}\}$, for $i\in [3]$.
Clearly, $\Gamma_i\ne\Gamma_{id}$ and $\Gamma_i'\ne\Gamma_{id}$ because the coefficients $\bar\alpha$ are nonzero.
Then let $\mathcal{F}=\{\Gamma_f:f\text{ is non-constant and forbidden}\}\cup \{\Gamma_i:i\in[3]\}\cup \{\Gamma_i':i\in[3]\}$ be the set of all forbidden planes.
Now we need to show that $\Gamma_{id}\setminus\bigcup_{F\in\mathcal{F}}F\ne \emptyset$, using the assumption that $\forall_{F\in\mathcal{F}} F\ne\Gamma_{id}$.

Clearly both $\Gamma_{id}$ and all planes $f\in\mathcal{F}$ contain the origin $o=(0,0,0)$.
Consider an arbitrary line $\ell \subset \Gamma_{id}$ that does not pass through the origin $o$ and contains infinitely many points with all coordinates rational.
For example, we can take the line passing through $(1,0,-\alpha_{1}/\alpha_{3})$ and $(0,1,-\alpha_{2}/\alpha_{3})$. 
Observe that for any $F\in\mathcal{F}$, if $|\ell \cap F| \geq 2$ there would be three non-collinear points (two
from $\ell$ and $o$) belonging to two distinct planes $\Gamma_{id}$ and $F$, so contradiction.
Hence $\ell\cap F$ is either empty or a point.
Recall that there is a constant number of planes in $\mathcal{F}$.
Then $\Gamma_{id}\setminus\bigcup_{F\in\mathcal{F}}F\supseteq \ell\setminus\bigcup_{F\in\mathcal{F}}F$ contains
some point with rational coordinates, because there are infinitely many such points on $\ell$.
This gives us a point in $\mathbb{Q}^{3}$ that belongs to $\Gamma_{id}$ and does not belong to any $F \in \mathcal{F}$. By scaling its coordinates to integers, we obtain~$\bar\gamma$.
\end{proof}
\noindent
Now we are ready to show the reduction from $\ldt3 \alpha t$ to $\ldt1 \alpha t$ for $t \ne 0$.

%For any $\bar\alpha$, every 3-partite instance of $\ldt3\alpha 0$ on at most $n$ numbers can be subquadratically reduced to a constant number of instances of 1-partite $\ldt1\alpha 0$ on at most $n'=\Oh(n)$ numbers.

\begin{lemma}\label{le:from_3_to_1_nonzero_t}
Assume $t$ and $\bar\alpha$ are non-zero.
For any $\bar\alpha$, every 3-partite instance of $\ldt3\alpha t$ reduces in linear time to a 1-partite instance of $\ldt1\alpha t$.
\end{lemma}
\begin{proof}
If the considered variant of 3\LDT{} is trivial, we can solve it in $\Oh(n)$ time and terminate.
Otherwise, by the extended Euclidean algorithm, we can choose an integer triple $\bar y$ such that $\sum_i \alpha_iy_i=t$ and let $\mathcal{Y}(\bar\alpha,t)=\max_i|y_i|$.
We apply Lemma~\ref{le:finding-gammas} on $\bar\alpha$ to obtain $\bar\gamma$ and construct the set $X$ as follows:

$$X=\bigcup_i \{C^2(x-y_i) + C\gamma_i + y_i: x \in S_i\},$$
where $C$ is a sufficiently big constant such that the absolute value of any linear combination of $\gamma$'s or $y$'s with coefficients $\alpha_{i}$ is smaller than $C$ (for example, we can take $C=1+\left(\max_i\max\{|\gamma_i|,|y_i|\}\right)\cdot\sum_i |\alpha_i|$).
If there is a triple $\bar x$ such that $x_i \in S_i$ and $\sum_i \alpha_i x_i = t$, then by the choice of $\bar \gamma$ and $\bar y$ we have $\sum_i \alpha_i z_i = t$, where $z_i = C^2(x_i-y_i) + C\gamma_i + y_i$.
As coefficients $\gamma_i$ are pairwise distinct, elements $z_i$ are pairwise distinct as well.
Recall that, for a given element $z\in X$, the function $f$ returns the index $i$ such that $z$ comes from an element of $S_i$.
Now consider a triple $\bar z$ such that $z_i \in X$ and $\sum \alpha_i z_i = t$. Let $z_i = C^2 (x_{f(i)}-y_{f(i)}) + C\gamma_{f(i)}  + y_{f(i)}$, where $x_{f(i)} \in S_{f(i)}$. By the definition of $C$ and the fact that $\sum_i\alpha_iz_i=t$, it holds that $\sum\alpha_i (x_{f(i)}-y_{f(i)})=0$, $\sum \alpha_i \gamma_{f(i)}=0$ and $\sum \alpha_i y_{f(i)}=t$.  We will show that $f$ is an allowed origin which guarantees that $x_{f(1)},x_{f(2)},x_{f(3)}$ is a valid solution of $\ldt3\alpha t$.

By Lemma~\ref{le:finding-gammas}, $\sum_i \alpha_i\gamma_{f(i)} = 0$ implies that the origin $f$ is either constant or allowed.
If $f$ is constant, there exists $j\in[3]$ such that $f(i) = j$ for all $i \in [3]$, so from the fact that $\sum_i \alpha_i \gamma_{f(i)}=0$ we have $\sum_i \alpha_i \gamma_j = 0$ and therefore $\sum_i \alpha_i = 0$ as $\gamma_j \neq 0$. It implies $\sum_i \alpha_i y_{f(i)} = \sum_i \alpha_i y_{j} = 0 \neq t$, hence $f$ cannot be constant and is allowed.

Recall that the considered numbers from sets $S_i$ are from the universe $[-U,U]$ where $U=n^c$ and $c$ is a parameter of the considered variant.
For $n$ large enough, $U\geq\mathcal{Y}(\bar\alpha,t)=\max_i|y_i|$. Consequently, $|x-y_i|\leq 2U$, so the absolute value of the elements of $X$ is at most $2C^2U+C^2+C\leq 3C^2U$, so we obtain a semi-instance of $\ldt1\alpha t$.
To avoid that, we first apply Lemma~\ref{le:handle_semi_instance} on sets $S_1,S_2,S_3$ with $d=3C^2$ and then create set $X$ from the obtained sets $S_1',S_2',S_3'$, which concludes the reduction to an instance of $\ldt1\alpha t$.
\end{proof}

Surprisingly, the case $t = 0$ is more difficult. We would like to proceed as in Lemma~\ref{le:from_3_to_1_nonzero_t}, which is enough to exclude all non-constant forbidden origins and, if $\sum_{i}\alpha_{i} \neq 0$, also the constant origins. However, if $\sum_i\alpha_i=0$, then we cannot exclude the constant origins. In other words, no matter what the chosen $\gamma$'s are we are not able to exclude the solutions that use three distinct elements corresponding to the elements of the same set $S_{j}$. This suggests that we should partition each of the sets $S_j$ into a few sets that contain no triple $\bar x$ of distinct elements such that $\sum_i\alpha_ix_i=0$. To this end, we introduce the following definition:

\begin{definition}
 For any $\gamma,\delta>0$, a set $X$ is \emph{$(\gamma,\delta)$-free} if no three distinct elements $a,b,c\in X$ satisfy $\gamma a+\delta b=(\gamma+\delta)c$.
\end{definition}
\noindent
Now we show that we can always partition an arbitrary subset of $[N]$ into $2^{\Oh(\sqrt{\log N})}$ $(\gamma,\delta)$-free sets\footnote{The idea of this proof is borrowed from \cite{JinX23}, who suggested how to improve the earlier version of the proof that appeared in the conference version of this work \cite{DudekGS20}.
Before we used probabilistic argument to obtain the partition from an arbitrary construction of dense $(\gamma,\delta)$-free sets (possibly more dense than the Behrend's one) that provides only oracle membership access, but in fact this is not necessary in our reduction from 3-partite to 1-partite instances of 3\LDT.}.

\begin{theorem}[cf.~\cite{Behrend46,Ruzsa93}]\label{thm:partition_with_Behrends}
For any $\gamma,\delta>0$ and a set $X \subseteq[N]$, it is possible to construct $e=~2^{\Oh(\sqrt{\log N})}$ sets $X_1, X_2, \ldots, X_e$
such that every $X_j$ is $(\gamma,\delta)$-free and $\bigcup_j X_j = X$.
The construction is deterministic and runs in $\Oh(|X|)$ time.
\end{theorem}
\begin{proof}
 Let $p=2(\gamma+\delta)+1,q=2^{\sqrt{\log N}}$ and $r=\lceil \frac qp \rceil$.
 We represent every element $x\in X$ in base~$q$: $x=\sum_{i=0}^{d-1}x_iq^i$ where $d=\lceil \log_q N\rceil=\lceil \sqrt{\log N} \rceil$ and $x_i\in\{0,\ldots,q-1\}$.
 For every $i$, define $\tilde{x}_i=\lfloor x_i/r \rfloor \in \{0,\ldots p-1\}$ and $x'_i=x_i \bmod r\in \{0,\ldots,r-1\}$. We put an element $x$  to a set $X_\tau$, where the index $\tau$ is a tuple $(\tilde{x}_0,\ldots,\tilde{x}_{d-1};||x'||_2^2)$.
 As $||x||_2^2< dr^2$, there will be at most $e=p^ddr^2\leq p^{d}dq^2=2^{\Oh(\sqrt{\log N})}$ distinct sets.
 
 We claim that the sets $X_\tau$ are $(\gamma,\delta)$-free.
 Suppose the contrary, that there are three distinct elements $a,b,c\in X_\tau$ such that $\gamma a+\delta b = (\gamma+\delta)c$.
 By combining that with representations of $a,b,c$ in base $q$ we obtain:
 
 $$\gamma\cdot \sum_i q^i(a_i'+r\cdot\tilde a_i)+\delta\cdot\sum_i q^i(b_i'+r\cdot\tilde b_i)= (\gamma+\delta)\cdot \sum_i q^i(c_i'+r\cdot\tilde c_i)$$
 By the definition of $X_\tau$ we have $\tilde a_i = \tilde b_i = \tilde c_i$ for $0\leq i < d$ so the above equation simplifies to
 $$\sum_i q^i(\gamma\cdot a_i'+\delta\cdot b_i'- (\gamma+\delta)\cdot c_i')=0$$
 Recall that $x_i'\leq r-1 < \frac{q}{2(\gamma+\delta)+1}$, so there is no carry between different base-$q$ digits of left-hand side of the expression, and for every $0\leq i<d$ we have $\gamma a_i'+\delta b_i' = (\gamma+\delta) c_i'$.
 Considering vectors $a',b',c'$, this gives us $\gamma a' + \delta b' = (\gamma+\delta) c'$.
 We combine it with the triangle inequality and the fact that $||a'||_2=||b'||_2 =||c'||_2$ from the definition of $X_\tau$:
 $$(\gamma+\delta)||c'||_2=\norm{(\gamma+\delta)c'}_{2}=\norm{\gamma a'+\delta b'}_{2} \leq \norm{\gamma a'}_{2}+\norm{\delta b'}_{2}=(\gamma+\delta)||c'||_2$$
which becomes an equality if and only if $a'$ and $b'$ are collinear.
Their norms are equal because $a, b \in X_\tau$, so we finally obtain $a'= b' = c'$ which contradicts the distinctness of $a,b$ and $c$. 
 \end{proof}

\begin{corollary}\label{cor:Behrend_of_at_least_s_elements}
 For any $\gamma,\delta,s>0$, in $s\cdot2^{\Oh(\sqrt{\log s})}$-time we can construct a $(\gamma,\delta)$-free set of $s$ elements from $[s\cdot2^{\Oh(\sqrt{\log s})}]$.
\end{corollary}
\begin{proof}
 Recall that $[n]=\{1,\ldots,n\}$.
 Let $d$ satisfy that the value  $e=~2^{\Oh(\sqrt{\log N})}$ from Theorem~\ref{thm:partition_with_Behrends} is bounded by  $e\leq 2^{d\sqrt{\log N}}$ and let $X=[s\cdot2^{d\sqrt{2\log s}}]$.
 By Theorem~\ref{thm:partition_with_Behrends} for $N=s^2$, we can partition~$X$ into at most $2^{d\sqrt{2\log s}}$ sets, so at least one of them contains at least $s$ elements.
\end{proof}

\begin{lemma}\label{le:from_3_to_1_zero_t}
For every $\bar\alpha$ it holds $\ldt3\alpha 0 \red \ldt1\alpha 0$.
\end{lemma}
\begin{proof}
We show that every 3-partite instance of $\ldt3\alpha 0$ can be reduced to a number of instances of 1-partite $\ldt1\alpha 0$.
If the considered variant of 3\LDT{} is trivial, we can solve it in $\Oh(n)$ time and terminate.
If $\sum_{i}\alpha_{i} \neq 0$, we apply Lemma~\ref{le:finding-gammas} on $\bar\alpha$ to obtain $\bar\gamma$ and construct the set $X=\bigcup_i\{Cx+\gamma_i:x\in S_i\}$ where $C$ is a sufficiently big constant chosen as in Lemma~\ref{le:from_3_to_1_nonzero_t}.
All the coefficients $\gamma_i$ are pairwise distinct, so there is no element added to $X$ more than once.
Similarly as in Lemma~\ref{le:from_3_to_1_nonzero_t}, this definition of set $X$ is enough to exclude all non-constant forbidden origins and, if $\sum_{i}\alpha_{i} \neq 0$, also the constant origins.

Consider now the case $\sum_i\alpha_i=0$. Without loss of generality, there is a permutation $i_{1},i_{2},i_{3}$ of~$[3]$ such that $\alpha_{i_1},\alpha_{i_2}$ are positive and $\alpha_{i_3}$ is negative. (If this is not the case, we first multiply all the coefficients by $-1$, apply our reduction to 1-partite variant and in the end we again multiply the coefficients of the obtained instances by $-1$.)
We apply Theorem~\ref{thm:partition_with_Behrends} to partition every set~$S_{i}$ into $e = 2^{\Oh(\sqrt{\log N})}$ $(\alpha_{i_1}, \alpha_{i_2})$-free subsets $S_{i,j}$, where $i\in[3],j\in [e]$. (We shift $S_i$ by $U$ to apply the theorem, and then shift the resulting sets $S_{i,j}$ by $-U$.) We reduce the  instance of $\ldt3\alpha0$ to~$e^3$ semi-instances of $\ldt1\alpha0$ by considering all possible combinations of the subsets and applying the construction for the case $\sum_{i}\alpha_{i} \neq 0$, which excludes all non-constant forbidden origins by the choice of~$\bar \gamma$ and all constant origins by the partition. To show that the reduction is subquadratic, we must analyze the sizes of the obtained instances. As $2^{\Oh(\sqrt{\log n})}<n^{\eps'}$ for any~$\eps'>0$, we have $2^{\Oh(\sqrt{\log n})}\cdot n^{2-\eps} < n^{2-\delta}$ for all $0<\delta<\eps$.

Finally, we note that we obtain semi-instances of $\ldt1\alpha 0$, because after the partition we have less than $n$ elements and the constructed set $X\subseteq[-2Cn^c,2Cn^c]$.
Similarly as in Lemma~\ref{le:from_3_to_1_nonzero_t}, before constructing $X$ we apply Lemma~\ref{le:handle_semi_instance} with $d=2C$, which gives us instances of $\ldt1\alpha 0$.
\end{proof}

\section{Equivalences between different variants of \Convolution3\LDT}
In this section, we show Theorem~\ref{th:every-convldt-equivalent} that claims that for all $c\geq1$ all non-trivial variants of \Convolution3\LDT{}$_c$ are subquadratic-equivalent. A scheme of reductions is shown in Figure~\ref{fig:conv-only-reductions}. 
To avoid clutter, till the end of this section we do not write the parameter $c$ in the description of the considered variants as $c$ is preserved in all the presented reductions and satisfies $c\geq 1$.

\begin{figure}[h]
    \centering
    \begin{tikzpicture}
    \begin{scope}[every node/.style={align=center},
                  every edge/.style={draw=black}]

	\node (n21) at (0,0) {$\Convolution\ldt 3 {\alpha_1} {t_1}$};
    \node (n23) at (5,0) {$\Convolution\ldt 3 {\alpha_2} {t_2}$};
    \node (n31) at (10,0) {$\Convolution\ldt 1 {\alpha_2} {t_2}$};
        
        \path [<->] (n21) edge node [below] {\tiny{Lem.~\ref{le:3-partite-conv-equivalent}}} (n23);
        \path [<->] (n23) edge node [below] {\tiny{Thm.~\ref{thm:conv_1_equiv_3}}} (n31);
    \end{scope}
    \end{tikzpicture}
    \caption{Subquadratic reductions between different variants of \Convolution3\LDT.
    Subscripts to coefficients $\alpha$ and $t$ denote if the reduction allows changing the coefficients or not. 
    All the reductions preserve the parameter $c\geq 1$ describing the size of the universe.
}
    \label{fig:conv-only-reductions}
\end{figure}

The arguments are similar to those that we used in Section~\ref{se:3LDT}, but we provide them all for completeness.
A subtle difference is that for arrays we cannot take a subset of elements (which was possible for sets) and we need to replace the elements to be removed with some dummy values that can never be part of a solution.
Similarly as in Lemma~\ref{le:handle_semi_instance}, we show that we can remove a subset of elements from the arrays by replacing them with some dummy values that can never be part of a solution to \Conv.
Additionally, we take into account possible increase of the size of the universe and extension of the arrays, measured by $d$ and $e$ respectively.
 
\begin{lemma}\label{le:perp}
 Let $\bar\alpha$ be non-zero, $c,d,e\geq 1$, $t$ be arbitrary constants.
 For every $n$, we can calculate in constant time values $n''=\Oh(n)$ and $\bar\perp$ such that $\perp_1,\perp_2,\perp_3\in[-(n'')^c,(n'')^c]$, $dn^c \leq (n'')^c$ and $en \leq n''$ and for all $\emptyset\ne S \subseteq [3]$ and $\bar z \in [-dn^c,dn^c]^3$ we have:
 $\sum_{i\in S} \alpha_i\perp_i + \sum_{i\in [3]\setminus S} \alpha_i z_i \ne t$.
\end{lemma}
\begin{proof}
By Proposition~\ref{prop:finding_betas}, for given non-zero coefficients $\bar\alpha$ we can find coefficients $\bar\beta$ such that $\forall_{\emptyset \ne S \subseteq [3]} \sum_{i\in S} \alpha_i \beta_i \ne 0$.
Let $\beta^*=\max_i|\beta_i|$ and $M=(\sum_i|\alpha_i|)n^cd$.
We set $\perp_i = \beta_iM$ and $(n'')^c\geq \max\{2\beta^*M,(en)^c\}$.
Clearly the above choice satisfies all the lower bounds for $n''$, so now we need to show the last required property.
Suppose there exists set $\emptyset\ne S\subseteq[3]$ and elements $z_j\in[-dn^c,dn^c]$ for $j\in T=[3]\setminus S$ such that $\sum_{i\in S} \alpha_i\perp_i + \sum_{i\in T} \alpha_i z_i = t$.
By the properties of $\bar\beta$, $\sum_{i\in S}\alpha_i\perp_i=\sum_{i\in S}\alpha_i\beta_iM\ne 0$, so $|\sum_{i\in S}\alpha_i\perp_i|\geq M$.
As $|\sum_{i\in T}\alpha_iz_i|\leq \sum_{i\in T}|\alpha_i||z_i|\leq \sum_i|\alpha_i|n^cd\leq M/2$, we obtain $\sum_{i\in S}\alpha_i\perp_i + \sum_{i\in T}\alpha_i z_i \geq M/2 >t$, hence contradiction.
\end{proof}

\noindent
Intuitively, this lemma allows us to transform an instance of \Conv\ to a larger instance (at least by a factor of $e$) in which we can use the dummy elements $\perp$ that can never be part of a solution and the original non-$\perp$ elements  can be multiplied by $d$ and still fit within the universe $[-(n'')^c,(n'')^c]$.
With all the properties at hand, we are ready to state the equivalence between 1- and 3-partite variants of $\Convolution$.
We start with the adaptation of Lemma~\ref{le:3-partite-equivalent}.

\begin{lemma}\label{le:3-partite-conv-equivalent}
For every $c\geq1$ all non-trivial 3-partite variants of \Conv$_c$ are subquadratic-equivalent.
\end{lemma}
\begin{proof}
We can modify the coefficients of the instance as in Lemma~\ref{le:3-partite-equivalent} but by updating both the values and positions of elements simultaneously in the same way.
For instance, in the reduction $\Convolution\ldt 3\alpha 0\red \Convolution\ldt 3\alpha t$ we create the arrays $A'_i[k+y_i]=A_i[k]+y_i$, where $y_i$ are from the extended Euclidean algorithm and satisfy that $\sum_i\alpha_iy_i=t$.
Similarly, to show $\Convolution\ldt 3\alpha 0\red \Convolution\ldt 3\beta 0$ we define $q=\lcm(\beta_1,\beta_2,\beta_3)$ and set $A'_i[k \frac{\alpha_iq}{\beta_i} ] = A_i[k]\frac{\alpha_iq}{\beta_i}$.
As argued in Lemma~\ref{le:3-partite-equivalent}, in the reductions the elements and their indices increase at most by the constant factor $w$,
so in order to obtain an instance of \Conv$_c$ we first apply Lemma~\ref{le:perp} with $e=d=w$ and then apply the reduction.
\end{proof}

\begin{theorem}\label{thm:conv_1_equiv_3}
For all $c\geq1$ and non-trivial coefficients $\bar\alpha$ and $t$ we have: \Conv$_c(1,\bar\alpha,t) \eqq $ \Conv$_c(3,\bar\alpha,t)$.
\end{theorem}
\begin{proof}
 Recall that $n'=\frac{n-1}{2}$ and the array is indexed with $[-n',n']$. Assume that $\alpha$ and $t$ specify non-trivial instances of \Conv, i.e. $\alpha_i \neq 0$ for all $i \in [3]$ and $\gcd(\alpha_1, \alpha_2, \alpha_3) \mid t$.
 As all the reductions preserve the parameter $c$ for the size of the universe, we omit it while writing e.g. $\Convolution\ldt 1 \alpha t$.
 First, we show $\Convolution\ldt 1 \alpha t \red \Convolution\ldt 3 \alpha t$.
 As in Lemma~\ref{lm:from_1_to_3}, we  use color-coding from Fact~\ref{fact:color-coding}, with the domain of functions shifted by $n'+1$,  to determine to which array we should add the $k$-th element from the input array and then use $\perp$ from Lemma~\ref{le:perp} (with $d=e=1$) for all the remaining empty fields in the arrays.
 
 In the reduction $\Convolution\ldt 3 \alpha t\red \Convolution\ldt 1 \alpha t$ we first seek solutions containing a repeated index (solutions with $|\{j_1,j_2,j_3\}|<3$), in $\Oh(n)$ time.
 Now we need to find solutions with all the indices distinct.
 Again we use color-coding from Fact~\ref{fact:color-coding} obtaining $\Oh(\log^2 n)$ functions $\ind\rightarrow [3]$ and we consider each of them separately.
 For a single coloring function $\chi$ we set $A[k]=A_{\chi(k)}[k]$ for each $k\in\ind$.
 Then we use the approach of Lemmas~\ref{le:from_3_to_1_nonzero_t} and~\ref{le:from_3_to_1_zero_t} to modify the entries of the array in such a way that every triple of elements forming a solution consists of elements from three distinct arrays.
 We consider two cases:
 
\underline{\bf Case of $t\ne 0$.} By the extended Euclidean algorithm, we can choose an integer triple $\bar y$ such that $\sum_i \alpha_iy_i=t$ and apply Lemma~\ref{le:finding-gammas} on $\bar\alpha$ to obtain a triple $\bar\gamma$ of nonzero coefficients such that $\sum_i\alpha_i\gamma_i=0$ and for every non-constant forbidden origin~$f$ we have $\sum_i\alpha_i\gamma_{f(i)}\ne 0$. Let $C = 1+(\max_i\max\{|\gamma_i|,|y_i|\})\cdot \sum_i |\alpha_i|$. Given the arrays $A_1,A_2,A_3$, we construct the array $A$ by setting for each $k\in\ind$:
$$A[k]= C^2(A_{\chi(k)}[k]-y_{\chi(k)}) + C\gamma_{\chi(k)} + y_{\chi(k)}$$

Similarly to Lemma~\ref{le:from_3_to_1_nonzero_t}, it follows that every valid solution of $\Convolution\ldt1\alpha t$ in $A$ is a valid solution of $\Convolution\ldt3\alpha t$ on $A_1,A_2,A_3$ and vice versa.
As the elements can increase at most by factor $2C^2$, in order to obtain an instance of $\Convolution\ldt 1 \alpha t$ with elements within the universe, we first run Lemma~\ref{le:perp} with $e=1,d=2C^2$.

\underline{\bf Case of $t=0$.} Suppose first that $\sum_i \alpha_i \neq 0$. As $t=0$, in the above construction we do not need the part corresponding to the values $y_i$ from extended Euclidean algorithm, so we can construct the array $A$ by setting for each $k\in\ind$:
 \begin{equation}\label{eq:construct_A[k]_from_3_to_1}
  A[k]= CA_{\chi(k)}[k] + \gamma_{\chi(k)}.
 \end{equation}

By the properties of $\bar\gamma$, this is enough for the case when $\sum_i\alpha_i\ne 0$, as every valid solution of $\Convolution\ldt1\alpha t$ in $A$ is a valid solution of $\Convolution\ldt3\alpha t$ on $A_1,A_2,A_3$ and vice versa.
The case of $\sum_i \alpha_i=t=0$ is more involved and we consider it in two separate lemmas.

\begin{lemma}\label{le:Behrend-free-array}
 For all non-trivial coefficients $\bar\alpha$ such that $\sum_i\alpha_i=0$ and every $n>0$ we can calculate in $\Oh(n\cdot2^{\Oh(\sqrt{\log n})})$ time an array $\BB$ of $n$ elements from $[n]$, indexed by $P=[-\frac{n-1}2,\frac{n-1}2]$, such that for every three pairwise distinct indices $\bar j\in P^3$ such that $\sum_i \alpha_i j_i=0$ we have $\sum_i \alpha_i \BB[j_i]\ne 0$.
\end{lemma}
\begin{proof}
 Without loss of generality, assume $\alpha_1,\alpha_2>0$.
 Let $P=\ind$ be the set of all indices of the arrays.
 By Theorem~\ref{thm:partition_with_Behrends}, we partition $P$ into $e=2^{\Oh(\sqrt{\log n})}$ pairwise disjoint $(\alpha_1,\alpha_2)$-free sets $P_{1},P_{2},\ldots,P_{e}$ such that $P= \bigcup_j P_{j}$ in $\Oh(n\cdot2^{\Oh(\sqrt{\log n})})$ time.
 Let position $p\in P$ belong to set $P_{x_p}$, $\textsc{Bin}(x)$ consists of all positions containing 1 in binary representation of $x$ and $q=\sum_i|\alpha_i|+1$.
 Then we set $\BB[p]=\sum_{k\in \textsc{Bin}(x_p)} q^k$.

 Clearly the elements from $\BB$ are positive and upper bounded by $q^{\lfloor \log e\rfloor+1} = q^{\Oh(\sqrt{\log n})} = 2^{\Oh(\sqrt{\log n})} \leq n$.
 Now we need to show that for all triples $\bar j\in P^3$ of pairwise distinct elements such that $\sum_i\alpha_ij_i=0$ it holds that $\sum_i\alpha_i\BB[j_i]\ne 0$.
 By the choice of sufficiently big $q$:
 $\sum_i \alpha_i\BB[j_i]=0$ if and only if $\forall_{0\leq k \leq \lfloor\log e\rfloor} \sum_{i: k\in \textsc{Bin}(x_{j_i})} \alpha_i =0$.
 As $\sum_i \alpha_i=0$ and $\alpha_i\ne 0$ for $i\in [3]$, for every $S\subseteq[3]$ we have that $\sum_{i\in S} \alpha_i=0$ if and only if $S=\emptyset \vee S=[3]$ which translates to the fact that positions $x_{j_1},x_{j_2},x_{j_3}$ either all have $2^k$ in their binary representation or none of them has.
 Hence $\sum_i\alpha_i\BB[j_i]=0$ iff $x_{j_1},x_{j_2},x_{j_3}$ have exactly the same binary representation, so $x=x_{j_i}$ for all $i\in[3]$.
 This means that positions $j_1,j_2,j_3$ satisfying that $\sum_i \alpha_i j_i=0$ belong to the same $(\alpha_1,\alpha_2)$-free set $P_x$, which is a contradiction.
\end{proof}

\begin{lemma}
 For all  $c\geq1$ and non-trivial coefficients $\bar\alpha,t$ such that $\sum_i\alpha_i=t=0$ it holds $\Convolution\ldtc 3 \alpha 0 \red \Convolution\ldtc 1 \alpha 0$.
\end{lemma}
\begin{proof}

 Similarly to Lemma~\ref{le:from_3_to_1_zero_t}, we filter out elements from each array using Behrend's set, but now by indices of the values.
 More precisely, without loss of generality, assume $\alpha_1,\alpha_2>0$ and we consider $(\alpha_1,\alpha_2)$-free sets.
 Let $P=\ind$ be the set of all indices of the arrays.
 By Theorem~\ref{thm:partition_with_Behrends}, we partition $P$ into $e=2^{\Oh(\sqrt{\log n})}$ pairwise disjoint $(\alpha_1,\alpha_2)$-free sets $P_{1},P_{2},\ldots,P_{e}$ such that $P= \bigcup_j P_{j}$ in $\Oh(n\cdot2^{\Oh(\sqrt{\log n})})$ time.
 For each $\bar u\in [e]^3$, we restrict each array $A_i$ only to elements from positions $P_{u_i}$ and construct an instance of $\Convolution\ldt1\alpha t$, similarly as in Equation~\eqref{eq:construct_A[k]_from_3_to_1}.
 
 However, now we cannot directly use the dummy symbols $\perp_i$ from Lemma~\ref{le:perp} as we might consider three positions $j_1,j_2,j_3$ of the array $A$ such that their elements originate from the same array~$A_\mu$, where $\mu=\chi(j_1)=\chi(j_2)=\chi(j_3)$ and they were all discarded from the array $A_\mu$, that is $j_i\notin~P_{u_\mu}$ for $i\in[3]$.
 Then using $\perp_\mu$ for each of them will result in a triple of elements satisfying $\sum_i \alpha_i A[j_i]=\sum_i \alpha_i \perp_\mu = 0$ that does not correspond to a valid solution of $\Convolution\ldt3\alpha t$.
 To overcome this issue, we will replace the dummy elements with elements of array $\BB$ from Lemma~\ref{le:Behrend-free-array}.
 
 Let $M=\sum_i|\alpha_i|n^c$, $n''$ satisfy $(n'')^c \geq 3CM$ and $\BB$ be an array defined in Lemma~\ref{le:Behrend-free-array} for coefficients $\bar\alpha$, consisting of $n''$ elements indexed by $[-\frac{n''-1}2,\frac{n''-1}{2}]$.
 As $c\geq1$, all elements from~$\BB$ are smaller than $M$.
  While applying Equation~\eqref{eq:construct_A[k]_from_3_to_1}, we replace the filtered out elements with a transformation of the corresponding elements from $\BB[k]$:
 \begin{equation}\label{eq:construct_A[k]_from_3_to_1_with_Behrend}
  A[k]= \begin{cases}
CA_{\chi(k)}[k] + \gamma_{\chi(k)} & k\in P_{u_{\chi(k)}} \wedge k\in [-\frac{n-1}2,\frac{n-1}2] \\
C(2M+\BB[k]) + \gamma_{\chi(k)} & \text{otherwise}
        \end{cases}
 \end{equation}
 
We say that an element $A[k]$ is discarded if $k\notin P_{u_{\chi(k)}}$ or $k\notin [-\frac{n-1}2,\frac{n-1}2]$.
It is easy to see that any valid solution of $\Convolution\ldt3\alpha t$ on $A_1,A_2,A_3$ gives a valid solution of $\Convolution\ldt1\alpha t$ in $A$ for some choice of $\bar u\in [e]^3$.
Now we show the opposite direction.
Consider a triple $\bar j$ of positions that form a solution of $\Convolution\ldt1\alpha 0$ for a particular choice of $\bar u\in [e]^3$.
This means that $j_i$ are pairwise distinct, $\sum_i\alpha_ij_i=0$, $\sum_i\alpha_iA[j_i]=0$.
Let $S=\{i:j_i\in P_{u_i} \wedge i \in [3] \}$ be the set of indices of elements originating from a non-discarded element, and $T=[3]\setminus S$.
Then:
\begin{align*}
0 = \sum_i\alpha_iA[j_i] &= \sum_{i\in S}\alpha_i(CA_{\chi(j_i)}[j_i] + \gamma_{\chi(j_i)}) + \sum_{i\in T}\alpha_i(C(2M+\BB[j_i])+\gamma_{\chi(j_i)}) \\
&= \sum_i\alpha_i\gamma_{\chi(j_i)} + C\left(\sum_{i\in S}\alpha_iA_{\chi(j_i)}[j_i] + \sum_{i\in T}\alpha_i\left(2M+\BB[j_i]\right)  \right)
\end{align*}
As the above expression equals 0 we have $\sum_i\alpha_i\gamma_{\chi(j_i)}=0$, so by Lemma~\ref{le:finding-gammas}, $(\chi(j_1),\chi(j_2),\chi(j_3))$ is either a constant ($\mu=\chi(j_1)=\chi(j_2)=\chi(j_3)$) or not forbidden origin.
Next, the coefficient by $C$ also equals $0$, so let $(\star) = \sum_{i\in S}\alpha_iA_{\chi(j_i)}[j_i] + \sum_{i\in T}\alpha_i\left(2M+\BB[j_i]\right) = 0$.
There are three cases to consider depending on the number of indices corresponding to non-discarded elements:
\begin{itemize}
\item[$|S|=3$:] We have $0=(\star)=\sum_i\alpha_iA_{\chi(j_i)}[j_i]=0$.
If all $\chi(j_i)$ are equal to some $\mu$, all elements originate from array $A_\mu$.
However, we restricted $A_\mu$ only to elements on positions from $(\alpha_1,\alpha_2)$-free set $P_\mu$, which contradicts the fact that $\sum_i \alpha_ij_i=0$.
Otherwise, $(\chi(j_1),\chi(j_2),\chi(j_3))$ is a non-forbidden origin, so we obtain a valid solution to the original instance of $\Convolution\ldt3\alpha t$.

\item[$|S|=0$:] We have $0=(\star)= \sum_i\alpha_i(2M+\BB[j_i]) = \sum_i \alpha_i\BB[j_i]$.
which contradicts the fact that this sum is non-zero, by Lemma~\ref{le:Behrend-free-array}.
\item[else:] For $|S|\in\{1,2\}$ we have $|T|\in\{1,2\}$.
Let $(\star_1) = \sum_{i\in T}2M\alpha_i= 2M\sum_{i\in T}\alpha_i $ so $|(\star_1)|\geq2M$ because $\alpha_i\ne 0$ (as we consider non-trivial coefficients $\bar\alpha$) and $\sum \alpha_i=0$ (by the assumption).
Next, let $(\star_2)=(\star)-(\star_1)= \sum_{i\in S} \alpha_iA_{\chi(j_i)}[j_i] + \sum_{i\in T}\alpha_i\BB[j_i]$ and then $|(\star_2)| \leq \sum_i |\alpha_i|n^c \leq M$ because elements from $A$ and $\BB$ are from $[-n^c,n^c]$.
Thus $|(\star)| = |(\star_1) + (\star_2)| \geq M$ which contradicts that $(\star) =0$.

\end{itemize}
\noindent
Finally, all the elements from $A$ are from $[-(n'')^c,(n'')^c]$, because $A_{\chi(k)}[k]\in[-n^c,n^c]$, $2M+\BB[k] < 3M$ and $(n'')^c\geq 3CM = 3C\sum_i|\alpha_i|n^c$, so the obtained array is an instance of $\Convolution\ldt1\alpha 0$.
To summarize, we created $e^3$ such instances of $\Convolution\ldt1\alpha 0$ with arrays of $n''=\Oh(n)$ elements, where $e=2^{\Oh(\sqrt{\log n})}$.
If one could solve $\Convolution\ldt1\alpha 0$ in $\Oh((n'')^{2-\eps})$ time, combining that with our reduction will give an algorithm solving an instance of $\Convolution\ldt3\alpha 0$ in time $\Oh(e^3(n'')^{2-\eps}) = \Oh(n^{2-\delta})$ for every $0<\delta<\eps$.
\end{proof}

This concludes the subquadratic reduction from an instance of $\Conv_c(3,\bar\alpha,0)$ to $2^{\Oh(\sqrt{\log n})}$ instances of $\Conv_c(1,\bar\alpha,0)$.
\qedhere
\end{proof}

\section{Equivalences between 3LDT and Conv3LDT}\label{se:equiv-3ldt-conv4ldt}

In this section, we show reductions between 3-partite variants of 3\LDT{} and \Conv{} (see Figure~\ref{fig:convreductions} for an overview).
We start with a folklore reduction $\Convolution\ldtc 3 \alpha 0 \red 3\LDT_{c+1}(3,\bar\alpha,0)$ that increases the size of the universe by $n$ \cite{ChanH20}.

\begin{figure}[h]
    \centering
    \begin{tikzpicture}
    \begin{scope}[every node/.style={align=center},
                  every edge/.style={draw=black}]
        
        \node (n12) at (-0.2,0) {$\Convolution\ldt 1 {\alpha_1} {t_1}$};
        
	\node (n21) at (4.5,-1.5) {\Convolution3\SUM};
        \node (n22) at (4.5,0) {$\Convolution\ldt 3 {\alpha_1} {t_1}$};
        \node (n23) at (4.5,1.5) {$\Convolution\ldt 3 {\alpha_2} {0}$};

        \node (n31) at (9,-1.5) {3\SUM};
        \node (n32) at (9,0) {$\ldt 3 {\alpha_3} {t_3}$};
        \node (n33) at (9,1.5) {$\ldt 3 {\alpha_2} {0}$};
        
        \node (n42) at (13,0) {$\ldt 1 {\alpha_4} {t_4}$};
        
        \path [<->] (n12) edge node[below] {\tiny{Thm. \ref{thm:conv_1_equiv_3}}} (n22);
        \path [->] (n22) edge node [left] {\tiny{Lem.}~\ref{le:3-partite-conv-equivalent}} (n23);
        \path [->] (n21) edge node [left] {\tiny{Lem.}~\ref{le:3-partite-conv-equivalent}} (n22);
        
        \path [<-] (n32) edge node [left] {\tiny{Lem.}~\ref{le:3-partite-equivalent}} (n33);
        \path [<-] (n31) edge node [left] {\tiny{Lem.}~\ref{le:3-partite-equivalent}} (n32);
        
        \path [->] (n23) edge  node[below] {\tiny{Lem. \ref{le:conv_to_nonconv}}} node[above]{\scriptsize{$c\rightarrow c+1$}} (n33);
        \path [<-] (n21) edge node [below] {\tiny{Thm. \ref{thm:convolution}}} node[above]{\scriptsize{$c-1\leftarrow c$}} (n31);
        
        \path [<->] (n32) edge node[below] {\tiny{Thm.~\ref{thm:everything-equivalent}}} (n42);
        
    \end{scope}
    \end{tikzpicture}
    \caption{Subquadratic reductions between different variants of \Conv{} and 3\LDT.
    Subscripts to coefficients $\alpha$ and $t$ denote if the reduction allows changing the coefficients or not.
    Only two reductions change the value of parameter $c$, all other reductions preserve it.
}
    \label{fig:convreductions}
\end{figure}

\begin{lemma}\label{le:conv_to_nonconv}
 For all $c\geq1$ and $\bar\alpha$, an instance of $\Convolution\ldtc 3 \alpha 0$ reduces in linear time to an instance of $3\LDT_{c+1}(3,\bar\alpha,0)$.
\end{lemma}
\begin{proof}
 We create sets $S_i=\{A_i[j]\cdot W + j : j\in \ind\}$ where $W=\left(\sum_i|\alpha_i| + 1 \right )\cdot n$.
 By the choice of~$W$, a solution $\bar x$ where $x_i\in S_i$ satisfies $\sum_i\alpha_ix_i=0$ iff both $\sum_i\alpha_i (x_i \bmod W)= 0$ and $\sum_i\alpha_i \lfloor\frac{x_i}{W}\rfloor = 0$, which exactly corresponds to the condition on both the values and indices of elements from the instance of $\Convolution\ldt3\alpha 0$.
 The elements from $S_i$ are bounded by $W\cdot n^c + \frac{n}{2} \le n^c \cdot n \cdot (\sum_i|\alpha_i| + 2)$ so we let $n''=n\lceil(\sum_i|\alpha_i|+3)^{1/(c+1)}\rceil$.
 Then we have an instance of $3\LDT_{c+1}${} with sets of at most $n''$ elements from $[-(n'')^{c+1},(n'')^{c+1}]$.
 By Lemma~\ref{le:handle_semi_instance}  we can reduce it to an instance of $3\LDT_{c+1}(3,\bar\alpha,0)$ on sets with $n'''=\Oh(n'')=\Oh(n)$ elements.
\end{proof}
Next, we show how to reduce 3\LDT{} to \Conv{} while controlling the exponent in the size of the universe.
The starting point is the celebrated proof by \Patrascu{} who showed a randomized reduction from 3\SUM{} to \Convolution 3\SUM{} (without controlling the size of the universe).
We follow a later deterministic approach of Chan and He \cite{ChanH20}, except that we need to tweak it to control the size of the universe and extend it to arbitrary variants of 3\LDT.

Recall that in the 3-partite variant of 3\SUM{} we are given three sets $S_1,S_2,S_3$ and need to check if there are three numbers $x_1,x_2,x_3$ such that $x_1+x_2=x_3$ and $x_i\in S_i$ for $i\in[3]$.
Earlier we defined 3\SUM{} for universe $[-n^3,n^3]$, but in this section we consider arbitrary universe $[n^c,n^c]$ for $c\geq2$, that is variants $3\LDT_c(3,(1,1,-1),0)$ for any $c\geq2$.
Similarly, $\Convolution3\SUM{}$ denotes a family of variants $\Conv_c(3,(1,1,-1),0)$ for any $c\geq1$.

\begin{theorem}\label{thm:convolution}
 For all $c\geq2$, an instance of 3\SUM{}$_c$ on 3 sets with $n$ numbers reduces in $n\polylog n$ time to $\polylog n$ instances of \Convolution 3\SUM{}$_{c-1}$ on arrays of $\Oh(n)$ elements.
\end{theorem}
\begin{proof}
 
 We define a semi-instance of 3\SUM{} on three sets $S_i$ of at most $n$ elements from $[-n^c,n^c]$ to be $m$-\textit{spread} if for all $i\in[3]$ we have $|\{x \bmod m: x\in S_i\}|=|S_i|$, that is all elements of each set have distinct remainders modulo $m$.
 Note that this implies that the sets have at most $m$ elements each. 
 An important tool in our reduction is the deterministic recursive algorithm for reducing 3\SUM{} to a number of instances of \Convolution 3\SUM{} described by Chan and He \cite{ChanH20}.
 We slightly reformulate their algorithm:
 
 \begin{lemma}[Section 5 \cite{ChanH20}]\label{le:3sum_to_spread}
 There exists a deterministic reduction $\mathcal{R}$ such that for every instance~$\II$ of 3\SUM{}$_c$ on three sets of $n$ elements and parameter $t$ satisfying that $t\geq \polylog n$, $\mathcal{R}$ reduces $\II$ to $k=t^3\polylog n$ instances $\II_1,\ldots,\II_k$ of 3\SUM{} on numbers from $[-n^c,n^c]$ where the $j$-th instance $\II_j$ is accompanied with a number $m_j\in[\frac n4,n)$ such that $\II_j$ is $m_j$-spread.
  The reduction runs in $\Oh(n^{1.5}\polylog n)$ time.
 \end{lemma}
 \begin{proof}
  We provide an informal overview of the algorithm of Chan and He.
  Given an instance of 3\SUM{} with three sets $S_i$, we find a number $m$ and hash elements of every set into buckets modulo~$m$.
  Let $t$ be a parameter.
  We call an element \textit{bad} if its bucket contains more than $t$ %\todob{they had t+1}
  elements, otherwise \textit{good}.
  The number $m\in\Theta(n)$ will be chosen in $n^{1.5}\polylog n$ time in such a way that for every $i\in[3]$ the total number of bad elements in $S_i$ is at most $d_1\frac{|S_i|^2}{mt}\log^{d_2} n$ for some constants $d_1,d_2>0$.\footnote{The choice of $m$ with desired properties is described in detail in Section 5 of \cite{ChanH20}.
  Our only modification is to set $m_1=m_2=\sqrt n$ which gives an algorithm for finding $m$ in $n^{1.5}\polylog n$ time.} In order to find a triple $\bar x$ such that $x_i\in S_i$ for $i\in[3]$ and $x_1+x_2=x_3$ we proceed as follows:
  \begin{itemize}
   \item First we find solutions in which all elements $x_i$ are good. We consider $t^3$ triples $\bar p\in[t]^3$ and for each of them we restrict the problem as follows. For each $i\in[3]$ we only keep the $p_i$-th element from each of the buckets of set $S_i$.
   This gives us $t^3$ $m$-spread instances of 3\SUM{}.
   \item In order to find a solution with at least one bad element $x_i$, e.g. $x_3$, we call the procedure recursively to solve 3\SUM{} for the following sets: $S_1,S_2$ and the third set consisting of only the bad elements from $S_3$. We proceed similarly for the case when $x_1$ or $x_2$ is bad.
  \end{itemize}
  We require that $t\geq 8 d_1\log^{d_2}n=\polylog n$ and choosing $m\in[\frac{n}{4},n)$ will give us the number of bad elements in set $S_i$ at most $d_1\frac{|S_i|^2}{mt}\log^{d_2} n \leq\frac{|S_i|^2}{2n}$.
  Suppose that a set was repeatedly replaced with the set of its bad elements and after the $j$-th recursive step its size is $\frac{n}{k_j}$. As $(\frac{n}{k_j})^2\frac{1}{2n}=\frac{n}{2{k_{j}}^2}$, we have $k_{j+1}\geq {2k_j^2}$. By setting $k_0=1$, we therefore obtain $k_j\geq 2^{2^{j}-1}$.
  Thus the depth of the recursion tree is $h=\Oh(\log \log n$) and each node makes at most 3 recursive calls by restricting only to bad elements from $S_i$ for $i\in[3]$.
  Hence in total there are at most $t^33^h=t^3\polylog n$ instances of $m$-spread 3\SUM{}, where the values of $m$ may differ between different nodes of the recursion tree.
 \end{proof}
\noindent
 Now we show how to reduce a $m$-spread instance of 3\SUM{} to an instance of \Convolution 3\SUM{}, decreasing the size of the universe by a factor of $m$.
 
 \begin{lemma}
  Let $m\in[\frac n4,n)$. We can reduce in linear time a $m$-spread instance of 3\SUM{} with numbers from $[-n^c,n^c]$ to an instance of \Convolution3\SUM{} with arrays on $n''=\Oh(n)$ elements from $[-(n'')^{c-1},(n'')^{c-1}]$.
 \end{lemma}
 \begin{proof}
 Let $\II^m$ be the $m$-spread instance of 3\SUM{}, $h(x) = x \bmod m$ and $v(x) = \left\lfloor \frac xm \right\rfloor$.
 We create the instance $\II$ of \Convolution 3\SUM{} with arrays $A_i$ indexed with $[-2m+1,2m-1]$ where the $i$-th of them is initially filled with $\perp_i$ defined in Lemma~\ref{le:perp}.
 For $i\in[3]$, for all $x\in S_i$ we set $A_i[h(x)]=v(x)$.
 Additionally, for all $z\in S_3$ we set $A_3[h(z)+m]=v(z)-1$.
 Note that no two elements were put in the same cell as the sets $S_i$ are $m$-spread.
 
 Now we show that this reduction is correct.
 Suppose there is a triple $\bar x:x_i\in S_i$ for $i\in[3]$ and $x_1+x_2=x_3$.
 There are two cases two consider:
 \begin{itemize}
  \item if $h(x_1)+h(x_2)<m$, we have $v(x_1)+v(x_2)=v(x_3)$ and $h(x_1)+h(x_2)=h(x_3)$, so the triple $(h(x_1),h(x_2),h(x_3))$ is a solution for $\II$,
  \item if $h(x_1)+h(x_2)\geq m$, we have $v(x_1)+v(x_2)=v(x_3)-1$ and $h(x_1)+h(x_2)=h(x_3)+m$, so the triple $(h(x_1),h(x_2),h(x_3)+m)$ is a solution for $\II$.
 \end{itemize}
 In the other direction, suppose there is a solution $\bar y$ for $\II$ and we show that it gives a solution for $\II^m$.
 Indeed, by definition we have $y_1+y_2=y_3$ and $A_1[y_1]+A_2[y_2]=A_3[y_3]$ so $y_1+y_2+m(A_1[y_1]+A_2[y_2])=y_3+mA_3[y_3]$.
 From the properties of $\perp_i$, every solution of $\II$ contains only the non-$\perp$ values, so every $y_i$ must be obtained from a value $x_i\in S_i$.
 We set $x_i=y_i+mA_i[y_i]$ for $i\in[3]$ and claim that it is a solution for $\II^m$.
 Clearly, for $i=1,2$ it holds that $x_i\in S_i$.
 For $i=3$, $x_3=y_3+mA_3[y_3]=(y_3-m)+m(A_3[y_3]+1)$, so $x_3\in S_3$ both for $y_3<m$ and $y_3\geq m$.
 
 Recall that the $m$-spread instance $\II^m$ consists of sets of at most $m\leq n$ elements from $[-n^c,n^c]$.
 Then $v(x)\in[-\frac{n^c}m,\frac{n^c}m]$ for $x\in S_i$ and $i\in[3]$ and the created arrays have elements on the positions from $[-(2m-1),2m-1]$.
 As we need $\perp_i$ for the non-occupied elements of the arrays, we apply Lemma~\ref{le:perp} for $n$, $d=4$ (as we need $\frac{n^c}{m} \leq dn^{c-1}$) and $e=4$ (as the obtained array has $4m-1\leq 4n$ elements).
 This gives us $n''= \Oh(n)$ and $\bar\perp$ such tat we can have an array of $n''$ entries, unoccupied elements from the $i$-th array are replaced with $\perp_i$ and all entries (occupied or not) are from $[-(n'')^{c-1},(n'')^{c-1}]$.
 Hence the claim follows.
 \end{proof}
 
 We set $t=\polylog n$ such that it satisfies the requirement from Lemma~\ref{le:3sum_to_spread}.
 Combining the two above lemmas we obtain a reduction from 3\SUM{} on 3 sets with~$n$ numbers from $[-n^c,n^c]$ to $k=t^3\polylog n=\polylog n$ instances of \Convolution 3\SUM{} on $3$ arrays on $n''=\Oh(n)$ elements from $[-n''^{(c-1)},n''^{(c-1)}]$, which concludes the proof.
\end{proof}
%Note that a slightly different variant of this reduction is possible, following the idea of Kopelowitz et al. \cite{KopelowitzPP16} that creates $T\cdot \Oh(L)$ instances instead of $T^3\cdot \Oh(L)$, by putting whole buckets at once and only choosing the cyclic shift of elements from the third bucket. Similarly, we could obtain a direct reduction from 3\SUM{} to the 1-partite instance of \Convolution 3\SUM{} by considering an array on $8n$ entries and inserting elements from the first array into positions congruent to 1 modulo 8, from the second to positions congruent to 3 and from the third array to positions congruent to 4 modulo~8.
\noindent
By Theorem~\ref{thm:convolution} and Lemma~\ref{le:conv_to_nonconv}, we immediately obtain the following corollary:
\begin{corollary}\label{cor:3sum_equiv_conv_3sum}
For all $c\geq2$ it holds $3\LDT_c(3,(1,1,-1),0) \eqq \Conv_{c-1}(3,(1,1,-1),0)$.
\end{corollary}
  \noindent
Now we can combine Lemma~\ref{le:conv_to_nonconv}, Theorem~\ref{thm:everything-equivalent}, Theorem~\ref{thm:convolution}, Lemma~\ref{le:3-partite-conv-equivalent} and Theorem~\ref{thm:conv_1_equiv_3} as presented in Figure~\ref{fig:convreductions} to obtain the following theorem:

\AllLDTEquivalences*

Finally, we discuss the subtlety of the negative indices in the arrays that we introduced in the definition of \Conv.
A natural question is whether we can achieve a similar result with only positive indices in \Conv. 
Consider a non-trivial variant of \Conv{} with all $\alpha_i>0$. There is a constant number of triples of indices $j_1,j_2,j_3\in [n]$ such that $\sum_i \alpha_i j_i = t$ which we can all check in linear-time.
Similarly for the variants with all $\alpha_i<0$.
Therefore, in those cases there is a fast algorithm for 3\SUM{} which makes such an equivalence unlikely.

For that reason, in the definition of $\Conv$ we allow negative indices of arrays.
However, if not all coefficients $\alpha$ have the same sign, we can work with instances of arrays with only positive indices: 

\begin{lemma}
 For all $c\geq1$, coefficients $\bar\alpha$ not all having the same sign, and arbitrary $t$, every non-trivial variant of $\Conv_c(3,\bar\alpha,t)$ is equivalent to the same variant of $\Conv_c(3,\bar\alpha,t)$, but on arrays with only positive indices.
\end{lemma}
\begin{proof}
 Suppose the original instance operates on arrays $A_i\ind$.
 Let $\delta=n'+1$ and we show how to choose~$\bar\beta$ that satisfy $\sum_i \alpha_i\beta_i=0$ and $\beta_i>0$ for all $i\in[3]$.
 As not all coefficients $\bar\alpha$ have the same sign, without loss of generality we can assume that $\alpha_1>0$ and $\alpha_2,\alpha_3<0$. 
 Then it suffices to set $\beta_1=-\alpha_2-\alpha_3$ and $\beta_2=\beta_3=\alpha_1$ that satisfy the above properties. 
 
 We shift the $i$-th array by $\delta \beta_i$ elements, that is we set $A_i'[k]=A_i[k-\delta \beta_i]$.
 Then any triple~$\bar j'$ of indices in $A'$ corresponds to indices $\bar j - \delta\bar\beta$ in $A$ and we have
 $\sum_i\alpha_ij_i'=\sum_i\alpha_i(j_i-\delta\beta_i)=\sum_i\alpha_ij_i - \delta(\sum_i\alpha_i\beta_i) = \sum_i\alpha_ij_i$.
 All the elements from arrays $A_i$ are moved to entries of $A_i'$ with indices from $[1,n'']$ where $n''=n'+\delta\max_i\beta_i=\Oh(n')$ so there is a solution of $\Convolution\ldt3\alpha t$ for arrays $A$ iff there is a solution for tables $A'$.
 Depending on the value of $\beta_i$, we have to fill either the first $\delta\beta_i-n'-1$ elements of $A_i'$ with $\perp_i$, the last $n''-(n'+\delta\beta_i)$ elements or both at the beginning and end of $A_i'$.
 To this end we apply Lemma~\ref{le:perp} with $d=1$ and $e=\lceil n''/n \rceil$ to obtain $n'''$ and $\bar\perp$ such that $\perp_1,\perp_2,\perp_3$ have the desired properties. Then we use $\bar\perp$ to fill the non-existent elements in the arrays and finally obtain arrays on $n'''$ elements from $[-(n''')^c,(n''')^c]$ on positions $[1,n''']$.
\end{proof}

\begin{corollary}
 For all $c\geq2$, all non-trivial 1- and 3-partite variants of 3\LDT{}$_c$ and \Conv{}$_{c-1}$ with non-trivial coefficients such that $\bar\alpha$ not all having the same sign are subquadratic-equivalent, even if we operate on arrays with only positive indices.
\end{corollary}

In particular, in \AVG{} we have $\bar\alpha=(1,1,-2)$, so not all the coefficients have the same sign and we can obtain a similar result for \AVG, which was considered by Erickson:

\begin{corollary}
 For all $c\geq2$, \AVG{}$_c$ is subquadratic-equivalent to \Convolution\AVG$_{c-1}$, even if we operate on  arrays with only positive indices.
\end{corollary}

\section{Reducing the size of the universe}\label{se:universe-reduction}

Fischer et al. \cite{FischerKP24} showed a deterministic reduction of the size of the universe for 3\SUM{} from arbitrarily big to cubic.
In this section we show how to extend their result to all 3-partite variants of 3\LDT.
Next, we show how to use reductions from the previous sections to show universe reduction for 1-partite 3\LDT{} and variants of \Conv.

At the input we are given $n$ numbers of length $\log U$.
For $U>2^{n^{\Omega(1)}}$, the trivial $\Oh(n^2\log U)$ algorithm is already subquadratic in the size of the input, so we can focus only on the cases when $U<2^{n^{o(1)}}$.
We will operate on numbers of length $\Oh(\log U)$: add them, subtract, multiply by a number at most $\Oh(n)$ or calculate modulo a small number.
All these operations take $\Oh(\log U)=n^{o(1)}$ time which is irrelevant from the perspective of subquadratic reductions.
For this reason we will not explicitly mention this factor in the time complexity of the presented reductions.

\begin{theorem}\label{thm:universe-reduction}
For all non-trivial coefficients $\bar\alpha$ and $t$, if an instance of $3\LDT_3(3,\bar\alpha,t)$ of sets of $n$ numbers over $[-n^3,n^3]$ can be solved in deterministic time $\Oh(n^{2-\eps})$ for some $\eps>0$, then $3\LDT_?(3,\bar\alpha,t)$ on sets of $n$ numbers from $[-U,U]$ can be solved in deterministic time $\Oh(n^{2-\eps'}\log^{c}U)$ for some constants $\eps',c>0$.
\end{theorem}
\begin{proof}
We show how to extend the proof of Theorem 2 from Fischer et al. \cite{FischerKP24} from 3\SUM{} to 3\LDT.
For that purpose we restate the relevant lemmas adjusted to the general, 3-partite case.
We do not restate the full proofs, we only highlight what needs to be changed.

Following \cite{FischerKP24}, we use the notation $\Ohtilda(T) = T(\log T)^{\Oh(1)}$.
To avoid clutter, in this section we implicitly allow instances of 3\LDT{} to have at most $n$ numbers, instead of exactly $n$ numbers.
Then, by applying Lemma~\ref{le:handle_semi_instance} we can transform each such instance to contain exactly $n$ elements from $[-n^3,n^3]$, as in the original definition.

\begin{lemma}[Lemma 11 \cite{FischerKP24}]\label{le:det_3sum_l11}
Let $\bar\alpha$ and $t$ be non-trivial coefficients and $0\leq\mu < 3, \delta>0$.
There is a deterministic algorithm that, given a $3\LDT_?(3,\bar\alpha,t)$ instance $A$ on sets $A_1,A_2,A_3$ of $n$ numbers over $[-U,U]$, either finds a positive modulus $m\in[n^\mu,2n^\mu)$ such that
$$ |\{(x_1,x_2,x_3)\in A_1\times A_2\times A_3: \sum_i \alpha_i x_i \equiv t \pmod m\}| \leq n^{3-\mu+\delta}(\log U)^{\Oh(1/\delta)}$$
or decides that $A$ is a yes-instance.
The algorithm runs in $\Ohtilda(n^{\max(\mu,1)+\delta})$ time.
\end{lemma}
\begin{proof}
In order to handle the more general case, there are only slight differences with respect to the original proof.
First, in our applications the number of instances $g$ always equals 1 so we can skip it.
Second, while computing $S(m\cdot p)$ with FFT we also need to take into account the coefficients~$\bar\alpha$.

Finally, we need to alter the analysis of the probability that a random prime $p$ divides $q=\sum_i\alpha_ix_i-t$ assuming that $q\ne 0$.
Let $r= 1+ \sum_i|\alpha_i|$.
As $t$ is constant, we have $q\in [-rU,rU]$.
Following the analysis from the original proof, the aforementioned probability is at most
$$ \frac{\log_{n^\delta}(rU)}{\Omega(n^\delta/\log n)} = \Oh(n^{-\delta}\log U)$$
\noindent
and the rest of the analysis of the correctness follows.
\end{proof}

\begin{lemma}[Lemma 12 \cite{FischerKP24}]\label{le:det_3sum_l12}
 Let $A_1,A_2,A_3\subseteq [-U,U]$ be sets of size $n$, and let $g\geq 1$.
 We can deterministically construct a partition of sets $A_1,A_2,A_3$ into subsets $A_i^1,\ldots,A_i^g$ for $i\in[3]$ where $A_i=\bigcup_{j\in[g]} A_i^j$, and a set of triples $R\subseteq[g]^3$ of size $\Oh(g^2)$ such that:
 \begin{itemize}
  \item Each set $A_i^{j_i}$ has size $\Oh(n/g)$ can be covered by an interval of length $\Oh(U/g)$.
  \item For all triples $x_1,x_2,x_3\in A_1\times A_2\times A_3$ with $\sum_i \alpha_ix_i = t$, there is some $(j_1,j_2,j_3)\in R$ with $x_i\in A_i^{j_i}$ for $i\in[3]$.
 \end{itemize}
 The algorithm runs in $\Ohtilda(n+g^2)$.
\end{lemma}
\begin{proof}
Our analysis is a slight generalization of the proof from \cite{FischerKP24}.
We only carefully describe how to alter the definitions in the original proof to handle all possible signs of coefficients in $\bar\alpha$.
First, similarly as in \cite{FischerKP24}, for ever $i\in [3]$ we partition set $A_i$ into subsets $A_i^1,\ldots,A_i^g$ such that for each $j\in [g]$:
\begin{itemize}
 \item $|A_i^j|\leq 2n/g$, and
 \item $\max(A_i^j)-\min(A_i^j)\leq 2U/g$.
\end{itemize}
We can find such a partition greedily, by iterating over the sorted elements of $A_i$ until either of the two required conditions is violated.
As every condition separately generates at most $g/2$ sets, in total we obtain at most $g$ sets.
We assume that the sets are in the natural order, that is $\max(A_i^j) < \min(A_i^{j+1})$ for all $j\in[g-1]$.

Now we show how to construct the set $R\subseteq [g]^3$ of $\Oh(g^2)$ triples such that for all triples $x_1,x_2,x_3\in A_1\times A_2\times A_3$ with $\sum_i \alpha_ix_i = t$, there is some $(j_1,j_2,j_3)\in R$ with $x_i\in A_i^{j_i}$ for all $i\in[3]$.
First, we add to $R$ all triples containing at most two distinct elements, there are $g+g\cdot(g-1)\cdot3 = \Oh(g^2)$ of them.
Now we need to add triples of three distinct elements.

Let $xA$ be the set $\{x\cdot a : a\in A\}$.
We construct the set $R$ as follows.
Suppose that $\alpha_3$ is positive.
For every pair $(i,j)\in [g]^2$ , we binary search the
smallest $k_1\in[g]$ such that $\max(\alpha_1 A_1^i) +\max(\alpha_2 A_2^j) + \max(\alpha_3 A_3^{k_3}) \geq t$ and the
largest $k_2\in [k_1,g]$ such that $\min(\alpha_1 A_1^i) +\min(\alpha_2 A_2^j) + \min(\alpha_3 A_3^{k_3}) \leq t$.
Then for all $k_1\leq k \leq k_2$ we add to $R$ the triple $(i,j,k)$.
The case of negative $\alpha_3$ is symmetric, we swap roles of $k_1$ and $k_2$.

Observe that the above approach is correct (we do not miss any relevant triple in $R$), it runs in $\Ohtilda(g^2+|R|)$ time and the size of $R$ is $\Oh(g^3)$.
By the above construction, we do not add to $R$ triples $(i,j,k)$ such that either:
\begin{enumerate}
 \item $\min(\alpha_1 A_1^i) +\min(\alpha_2 A_2^j) + \min(\alpha_3 A_3^k) > t$, or
 \item $\max(\alpha_1 A_1^i) +\max(\alpha_2 A_2^j) + \max(\alpha_3 A_3^k) < t$.
\end{enumerate}
Clearly, for such triples there is no triple $(x_1,x_2,x_3)\in A_1^i\times A_2^j \times A_3^k$ such that $\sum_i \alpha_i x_i =t$.
We call such a triple $(i,j,k)$ \emph{immediate}.
Now we show that the size of $R$ is actually $\Oh(g^2)$ as desired.

Similarly as in \cite{FischerKP24}, consider the partially ordered set $P=([g]^3,\prec)$ where $(i,j,k)\prec(i',j',k')$ if and only if $\max(\alpha_1A_1^i)<\min(\alpha_1A_1^{i'})$, $\max(\alpha_2A_2^j)<\min(\alpha_2A_2^{j'})$ and $\max(\alpha_3A_3^k)<\min(\alpha_3A_3^{k'})$.
Note that this is not the standard lexicographical order, as some $\alpha_i$ might be negative.
Observe that for two triples $(i,j,k), (i',j',k')$ such that $(i,j,k)\prec(i',j',k')$, at least one of them is immediate.
Indeed, suppose the contrary, that they both falsify conditions (1) and (2). Then:
\begin{align*}
 t &\leq \max(\alpha_1 A_1^i) +\max(\alpha_2 A_2^j) + \max(\alpha_3 A_3^{k}) &&\text{as } (i,j,k) \text{ falsifies (2)} \\
 & < \min(\alpha_1A_1^{i'})+\min(\alpha_2A_2^{j'})+ \min(\alpha_3A_3^{k'}) &&\text{as } (i,j,k)\prec(i',j',k') \\
 & \leq t &&\text{as } (i',j',k') \text{ falsifies (1)}
\end{align*}
\noindent
which is a contradiction.
Thus, on any chain (totally ordered set of elements) in $P$ there is at most one non-immediate triple in $R$.
Hence it suffices to show that we can cover $P$ with $\Oh(g^2)$ chains.
Let $\sgn(x)=\frac{x}{|x|}$\footnote{In this paragraph we only consider the sign of coefficients $\alpha$. They are always non-zero as the variant of 3\LDT{} that we consider is non-trivial.}.
For all $a,b\in [-(g-1),\ldots,(g-1)]$ we take the chain $C_{a,b}=\{(f(t,\sgn(\alpha_1),f(t,\sgn(\alpha_2))+a,f(t,\sgn(\alpha_3))+b: t\in [g]\}\cap [g]^3$ where $f(t,x) = t$ if $x=1$ and $g+1-t$ otherwise.
Intuitively, function $f$ ensures that the specific coefficient in the chain is either increasing or decreasing, depending on the sign of the corresponding coefficient of $\alpha$.
Then any triple $(i,j,k)\in[g]^3$ is covered by the chain $C_{a,b}$ for
$a=j-i$ if $\sgn(\alpha_1)=\sgn(\alpha_2)$ or $a=j+i-(g+1)$ if $\sgn(\alpha_1)\ne\sgn(\alpha_2)$ and $b$ calculated in the same way.

This concludes the proof that $|R|=\Oh(g^2)$ and hence the running time is $\Ohtilda(n+g^2)$.
\end{proof}
\noindent
This immediately implies:

\begin{corollary}[Corollary 13 \cite{FischerKP24}]\label{cor:det_3sum_c13}
 For any $g\geq 1$, a given $3\LDT_?(3,\bar\alpha,t)$ instance on sets of $n$ numbers over the universe $[-U,U]$ can be deterministically reduced to $\Oh(g^2)$ instances on sets of at most $\Oh(n/g)$ numbers over the universe $[-U',U']$ where $U'=\Oh(U/g)$.
 The running time of the reduction is $\Ohtilda(ng)$.
\end{corollary}

\begin{observation}[Observation 15]\label{obs:det_3sum_o15}
 Let $U\geq U'$.
 An instance of $3\LDT_?(3,\bar\alpha,t)$ on sets of $n$ numbers over $[-U,U]$ can be reduced to $\Oh((\frac{U}{U'})^3)$ many instances of $3\LDT_?(3,\bar\alpha,t)$ over $[-U',U']$ on sets of at most $n$ numbers each.
\end{observation}
\begin{proof}
Exactly the same as the original one.
\end{proof}
\noindent
Now we show how to extend the proof of Theorem 2 from \cite{FischerKP24} from 3\SUM{} to 3\LDT{}.
Let $\delta = \eps/32, \mu = 2-2\delta$ and $\alpha =\frac12+2\delta$.
At a high level, the authors of \cite{FischerKP24}:
\begin{enumerate}
 \item apply Lemma~\ref{le:det_3sum_l11} to find a modulus $m$ that gives few pseudo-solutions (triples $x_1,x_2,x_3\in A_1\times A_2\times A_3$ such that $\sum_i\alpha_ix_i \equiv t \pmod m$),
 \item reduce listing pseudo-solutions to listing solutions,
 \item apply self-reduction from Corollary~\ref{cor:det_3sum_c13} to obtain instances on sets of at most $n^{1-\alpha}$ numbers over $[-U',U']$ for $U'=\Oh(m/n^\alpha)=\Oh(n^{\mu-\alpha})$,
 \item apply reduction from Observation~\ref{obs:det_3sum_o15} to further reduce the universe size to $\Oh(n^{3-3\alpha})$.
\end{enumerate}
\noindent
We already showed how to adapt steps 1,3 and 4 to the reduction for 3\LDT.
For step 2, the authors use the 1-to-$\Oh(1)$ correspondence between pseudo-solutions within set $A$ and solutions within set $A'=\{a \bmod m, (a\bmod m)+m: a\in A\}$.
In the case of 3\LDT{} this is not sufficient, because $\sum_i \alpha_i x_i \equiv t \pmod m$ if and only if $\sum_i \alpha_i (x_i \bmod m) - t = m\cdot k$ for $k\in \{-r,\ldots,r\}$ where $r= 1+ \sum_i|\alpha_i|$.

Let $d=\gcd(x_1,x_2,x_3)$.
We apply the Chinese remainder theorem to calculate integer triple $\bar y$ such that $\sum_i \alpha_i y_i=\gcd(x_1,x_2,x_3)=d$.
As coefficients $\bar\alpha$ and $t$ are non-trivial, we have $d | t$.
For every $k\in \{-r,\ldots,r\}$ we have $d|mk$, so we can write $\sum_i \alpha_i (x_i \bmod m) - t = \frac{mk}{d}(\sum_i \alpha_iy_i)$ which is equivalent to
$$\sum_i \alpha_i \left(x_i \bmod m  -\frac{mk}{d}y_i\right) = t $$

This gives us a constant number of instances of $3\LDT$ over $[-cm,cm]$ for $c=\Oh(1)$ to consider. Then the original instance $A$ is a yes-instance if and only if at least on of the created instances has a valid solution or step 1 searching for the good modulus $m$ returned that $A$ is a yes-instance.
The constant $c$ is subsumed during the step 3 and the analysis of the remaining steps follows.
\end{proof}

Recall Definition~\ref{def:reduction} of subquadratic reduction that we denote with $\red$.
In our case the size of input is $\Oh(n\log U)$, and we consider universes of size at most $2^{n^{o(1)}}$.
Then $\Oh(n^{2-\eps'}\log^{c}U)$ is subquadratic in $\Oh(n\log U)$, so the above theorem gives us a subquadratic reduction:

\begin{corollary}\label{cor:universe-reduction}
For all non-trivial coefficients $\bar\alpha$ and $t$:
 $$3\LDT_?(3,\bar\alpha,t) \red 3\LDT_3(3,\bar\alpha,t).$$
\end{corollary}

Now we can combine this result with the existing reductions from Section~\ref{se:3LDT} to obtain a similar claim for 1-partite 3\LDT:

\begin{lemma}
For all non-trivial coefficients $\bar\alpha$ and $t$:
$$ 3\LDT_?(1,\bar\alpha,t)
\red 3\LDT_3(1,\bar\alpha,t)$$
% For all non-trivial coefficients $\bar\alpha$ and $t$, if an instance of $\LDT_3(1,\bar\alpha,t)$ of $n$ numbers over $[-n^3,n^3]$ can be solved in deterministic time $\Oh(n^{2-\eps})$ for some $\eps>0$, then $\LDT_?(1,\bar\alpha,t)$ on $n$ numbers from $[-U,U]$ can be solved in deterministic time $\Oh(n^{2-\eps'}\log^cU)$ for some constants $\eps',c>0$.
\end{lemma}
\begin{proof}

The reduction combines three existing reductions:
$$ 3\LDT_?(1,\bar\alpha,t)
\stackrel{\text{L~\ref{lm:from_1_to_3}}}\red 3\LDT_?(3,\bar\alpha,t)
\stackrel{\text{C~\ref{cor:universe-reduction} }}\red 3\LDT_3(3,\bar\alpha,t)
\stackrel{\text{T~\ref{thm:everything-equivalent}}}\red 3\LDT_3(1,\bar\alpha,t)$$
\noindent
In the first the step we can use Lemma~\ref{lm:from_1_to_3}, as it does not depend on the values of the elements.
\end{proof}
\noindent
Clearly, $3\LDT_3(p,\bar\alpha,t) \red 3\LDT_?(p,\bar\alpha,t)$, so we obtain the main result of this section:

\AllLDTAtMost*

\noindent
Finally, we can obtain similar universe reductions for Convolution variants of 3\LDT:

\AllConvLDTAtMost*
\begin{proof}
It suffices to show that or all $p\in\{1,3\}$ and non-trivial coefficients $\bar\alpha$ and $t$:
$$\Conv_?(p,\bar\alpha,t) \red \Conv_2(p,\bar\alpha,t)$$

The cases of $p=1$ and $p=3$ are established by the following chain of inequalities that we justify in detail below:
\begin{align*}
\Conv_?(1,\bar\alpha,t)
&\reddesc{T~\ref{thm:conv_1_equiv_3}} \Conv_?(3,\bar\alpha,t) \reddesc{L~\ref{le:3-partite-conv-equivalent}} \Conv_?(3,\bar\alpha,0) \\
&\reddesc{L~\ref{le:conv_to_nonconv}} 3\LDT_?(3,\bar\alpha,0)
\reddesc{C~\ref{cor:universe-reduction}} 3\LDT_3(3,\bar\alpha,0) \\
&\eqqdesc{T~\ref{th:everything-equivalent}} \Conv_2(3,\bar\alpha,t)
\eqqdesc{T~\ref{th:everything-equivalent}} \Conv_2(1,\bar\alpha,t)
\end{align*}

In the first inequality, we can use Theorem~\ref{thm:conv_1_equiv_3}, because it only uses color-coding (Fact~\ref{fact:color-coding}), which does not depend on the values of the elements and Lemma~\ref{le:perp}.
The latter operates on the values which are only a slight, linear modification of the original ones, so the claim can be adjusted.

In the second inequality, we again adapt Lemma~\ref{le:3-partite-conv-equivalent} to bigger values.
Finally, in the third inequality we can use Lemma~\ref{le:conv_to_nonconv} in which we only apply a linear function to each of the elements, so it also applies for numbers from $[-U,U]$.
The claims used in the following steps apply directly, with no further modifications.
Hence the claim follows.
\end{proof}

\bibliographystyle{plainurl}
\bibliography{main}

\end{document}